\documentclass[dvips]{jmi0}
\usepackage[pdftex]{graphicx}
%
%
\title{A complete transformation rule set and a minimal equation set \\for CNOT-based 3-qubit quantum circuits (Draft)}
%
\author{IS}{Issei Sakashita}{i-sakashita@math.kyushu-u.ac.jp}
%
\affiliation{IS}{Kyushu University, 744 Motooka, Nishi-ku, Fukuoka 819-0395, Japan}%
\abstract{%
We introduce a complete transformation rule set and a minimal equation set 
for controlled-NOT (CNOT)-based quantum circuits.
Using these rules, quantum circuits that compute the same Boolean function are reduced to a same normal form.
We can thus easily check the equivalence of circuits by comparing their normal forms. By applying the Knuth-Bendix completion algorithm to a set of modified 18 equations introduced by Iwama et al. 2002 \cite{Iwama}, we obtain a complete transformation rule set (i.e., a set of transformation rules with the properties of `termination' and `confluence').
Our transformation rule set consists of 114 rules.
Moreover, we found a minimal subset of equations for the initial equation set. 
}
\keywords{%
Quantum circuit, String rewriting system
}
\usepackage{slashbox}
\begin{document}
\maketitle
%
%
%
%
\section{Introduction}
Quantum computers were proposed in the early 1980s \cite{benioff80, benioff}.
Significant contributions to quantum algorithms include the Shor factorization algorithm \cite{Shor94, Shor97} and the Grover search algorithm \cite{grover96}.
The quantum-circuit model of computation is due to Deutsch \cite{deu89}, and it was further developed by Yao \cite{Yao93}.

After the works of Deutsch and Yao the concept of a universal set of quantum gates became central in the theory of quantum computation. 
A set $G = \{G_{1,n_1},\cdots , G_{r,n_r}\}$
 of $r$ quantum gates $G_{j,n_j} $ acting on $n_j$ qubits $(j = 1,\cdots , r)$, is called universal if any unitary action $U_n$ on $n$ input quantum states can be decomposed into a product of successive actions of $G_{j,n_j}$ on different subsets of the input qubits \cite{Mar}.
 A first example of 3 qubit universal gate set consists of Deutsch's gates $\mathbf{Q}$ \cite{deu89}. 
 The gate $\mathbf{Q}$ is an extension of the Toffoli gate \cite{Toffoli81}. 
DiVincenzo showed that a set of two-qubit gates is exactly universal for quantum computation \cite{div95}.
After the result of DiVincenzo, Barenco showed that a large subclass of two-qubit gates are universal, and moreover, that almost any two-qubit gate is universal \cite{Barenco95A}. 
Barenco et al. showed that the set consisting of one-qubit gates and CNOT gates is universal \cite{Barenco95B}. 
There have been a number of studies that investigate the number of gates for decomposing an any gate of $n$ qubits in $U(2^n)$.
For the universal set consisting of one-qubit gates and CNOT gates, the number of gates is $O(n^34^n)$ by Barenco et al. \cite{Barenco95B}.
Knill reduced this bound to $O(n4^n)$ \cite{Knill95}.
Most useful information about universal quantum gates can be obtained from a survey paper written by A. Galindo and M.A. Marin-Delgado \cite{Mar}.

The design of a good quantum circuit plays a key role in the successful implementation of a quantum algorithm.
For this reason, Iwama et al. presented transformation rules that transform any `proper' quantum circuit into a `canonical' form circuit \cite{Iwama}.
There is, however, no discussion about the minimal size of a quantum circuit.
In this article, we formulate a quantum circuit as a string and then simplify the circuit by using string rewriting rules to investigate them formally. 
Since a string rewriting system can be analyzed by using a monoid, we require several properties about monoids and groups. 

String rewriting systems simplify strings by using transformation rules, and they have played a major role in the development of theoretical computer science. Several studies of string rewriting systems have been investigated \cite{Book}.
Let $M$ be a monoid and $T$ a submonoid of finite index in $M$.
If $T$ can be presented by a finite complete rewriting system, so $M$ can \cite{JW97}.
The problem of confluence is, in general, undecidable.
Parkes et al. showed that the class of groups that have monoid presentations obtainable by finite special $[\lambda]$-confluent string rewriting systems strictly contains the class of plain groups \cite{parkes2004}.
The word problem for a finite string is, in general, undecidable. 
If R is a finite string rewriting system that are Noetherian and confluent, then the word problem is decidable \cite{Boo82a, Otto91}.
Book considered the word problem for finite string rewriting systems in which the notion of `reduction' is based on rewriting the string as a shorter string \cite{Boo82a}. 
He showed that for any confluent system of this type, there is a linear-time algorithm for solving the word problem.
Using a technique developed in \cite{Boo82a}, Book and \'O'D\'ulaing \cite{BoODu81} showed that there is a polynomial-time algorithm for testing if a finite string rewriting system is confluent. 
Gilman \cite{Gil79} considered a procedure that, beginning with a finite string rewriting system, attempts to construct an equivalent string rewriting system that is Noetherian and  confluent, that is, a string rewriting system such that every congruence class has a unique `irreducible' string.
This procedure appears to be a modification of the completion procedure developed by Knuth and Bendix \cite{KnBe70} in the setting of term-rewriting systems. Narendran and Otto \cite{NaOt88a} 
also contributed to this topic.
Later, Kapur and Narendran \cite{kana85b} showed how the Knuth-Bendix completion procedure could be adapted to the setting of string rewriting systems.

We do not deal with the general theory whether string rewriting systems are decidable or undecidable.
We introduce an idea to reduce the size of a quantum circuit by using a string rewriting system. 
Our string rewriting rules based on 18 equations introduced by Iwama et al. 2002 \cite{Iwama}. 
The Iwama's equations can not be considered as a complete rewriting rules as it is.
That is, it does not have properties of termination and confluence.
We would like to obtain a complete transformation rule set (i.e., a set of transformation rules with the properties of `termination' and `confluence') for reducing a quantum circuits. 
Therefore, we apply the Knuth-Bendix completion algorithm to a modified 18 equations. 
We obtain our complete transformation rule set consisted of 114 rules.
There are three major results that are obtained by our investigation, for 3 qubits, the length of normal form is at most 6, the number of normal form is 168.
Furthermore, we found a minimal subset of equations.
The number of the general quantum circuits to arbitrary $n$ qubits is already known by research of Clifford groups. 
So $CQC_3$ is considerd as a subgroup of a Clifford group.

This article consists of as follows.
In section 2, we describe formal definitions  of a quantum circuit.
We consider a circuit that consists of just CNOT gates on 3 qubits. 
In section 3, we prove several properties for string rewriting systems.
In section 4, we define a quantum circuit rewriting system for 3 qubits and show several related properties about it. We show that the number of normal forms is 168 on 3 qubits.
In section 5, we found a minimal subset of equations.
%
%
%
\section{Definitions of Quantum Circuits}
In this section, we introduce several definitions related to quantum circuits.
First, we define quantum bits (qubits), quantum gates, and quantum circuits.
\begin{definition}[Quantum bits, gates, and circuits]
Let $\alpha , \beta \in \mathbb{C}, | 0 \rangle=(1,0), | 1 \rangle=(0,1)$ and $m \in \mathbb{N}$.
\begin{itemize}
\item A single qubit is denoted by a vector $| x \rangle = \alpha  | 0 \rangle + \beta | 1 \rangle$.
\item A $n$-qubit is denoted by $| x_1 \rangle \otimes | x_2 \rangle \otimes \cdots \otimes | x_n \rangle \in \mathbb{C}^{2^n}$.
\item A $n$-qubit quantum gate is an unitary operator  \\ $G : \mathbb{C}^{2^n} \to \mathbb{C}^{2^n}$.
\item A quantum circuit $Cir$ of size $m$ is denoted by $Cir  = (G_1,G_2,\cdots , G_m)$ where $G_i$  ($i = 1,2,\cdots ,m$) are $n$-qubit quantum gates. 
\item An empty circuit is denoted by $\lambda$.
\item The output of  circuit $Cir = (G_1,G_2,\cdots , G_m)$ for an input $|x \rangle$ is $(G_m \circ \cdots \circ G_2 \circ G_1) |x \rangle$.
\end{itemize}
\begin{definition} Let $m, l \in N$, $Cir_1= (G_1,G_2,\cdots , G_m)$ and $Cir_2 = (G'_1,G'_2,\cdots , G'_l)$ be $n$-qubit quantum circuits.
We define an equivalence relation $=_{cir}$ by 
\begin{eqnarray*}
&&Cir_1 =_{cir}Cir_2\\
&&\iff \forall |x\rangle \in C^2 , (G_m \circ \cdots \circ G_1) |x\rangle = (G'_l \circ \cdots  \circ G'_1) | x\rangle.
\end{eqnarray*}
\end{definition}
Next, we introduce a quantum gate that plays an important role in proving the universality of a quantum circuit. 
\begin{definition}
The $n$-qubit controlled-NOT (CNOT) gate is a unitary operator  $[c ,t]_n : \mathbb{C}^{2^n} \to \mathbb{C}^{2^n}$ 
($c , t \in \{1,2, \cdots ,n\}$) defined by
$$\bigotimes_{i=1}^{n} |\delta_i \rangle \mapsto 
\bigotimes_{i=1}^{t-1} |\delta_i \rangle 
\otimes |\delta_t \oplus \delta_c \rangle \otimes  
 \bigotimes_{i=t+1}^{n} |\delta_i \rangle.$$
We call $c$ the \textit{control bit} and $t$ the \textit{target bit}.
We use a version of Feynmann's notation \cite{Feynmann85} for diagrammatic representations of CNOT gates (cf. Figure \ref{cnot}).
An example of 3 qubit quantum circuit is illustrated in Figure \ref{circuit01}. Each gate is applied in turn from left to right to the $n$ qubits.
\begin{figure}[h]
\begin{center}
  \includegraphics[width=30mm]{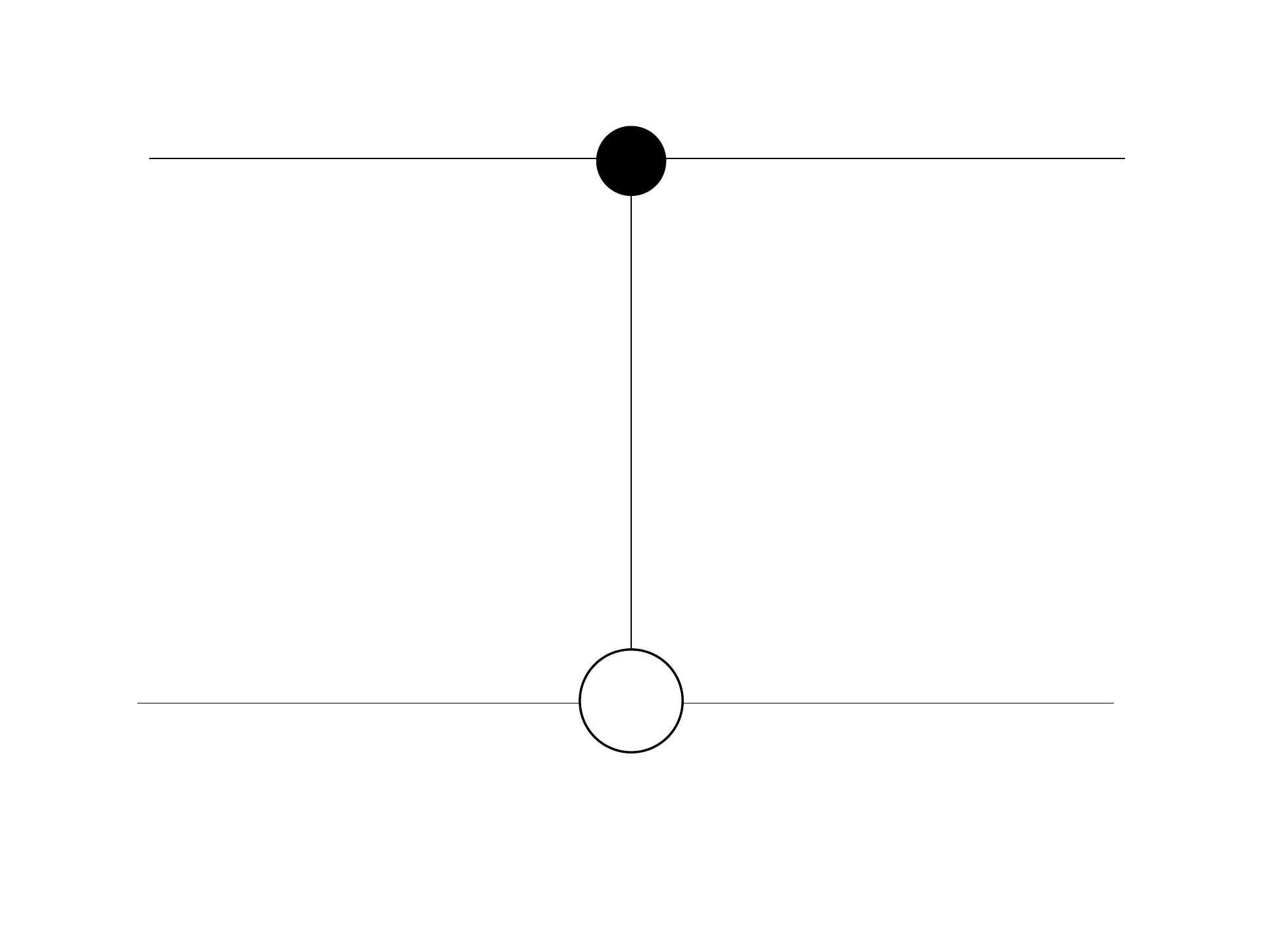}
  \caption{2-qubit CNOT gate}\label{cnot}
\end{center}
\end{figure}
\end{definition}
\begin{figure}[h]
\begin{center}
  \includegraphics[width=50mm]{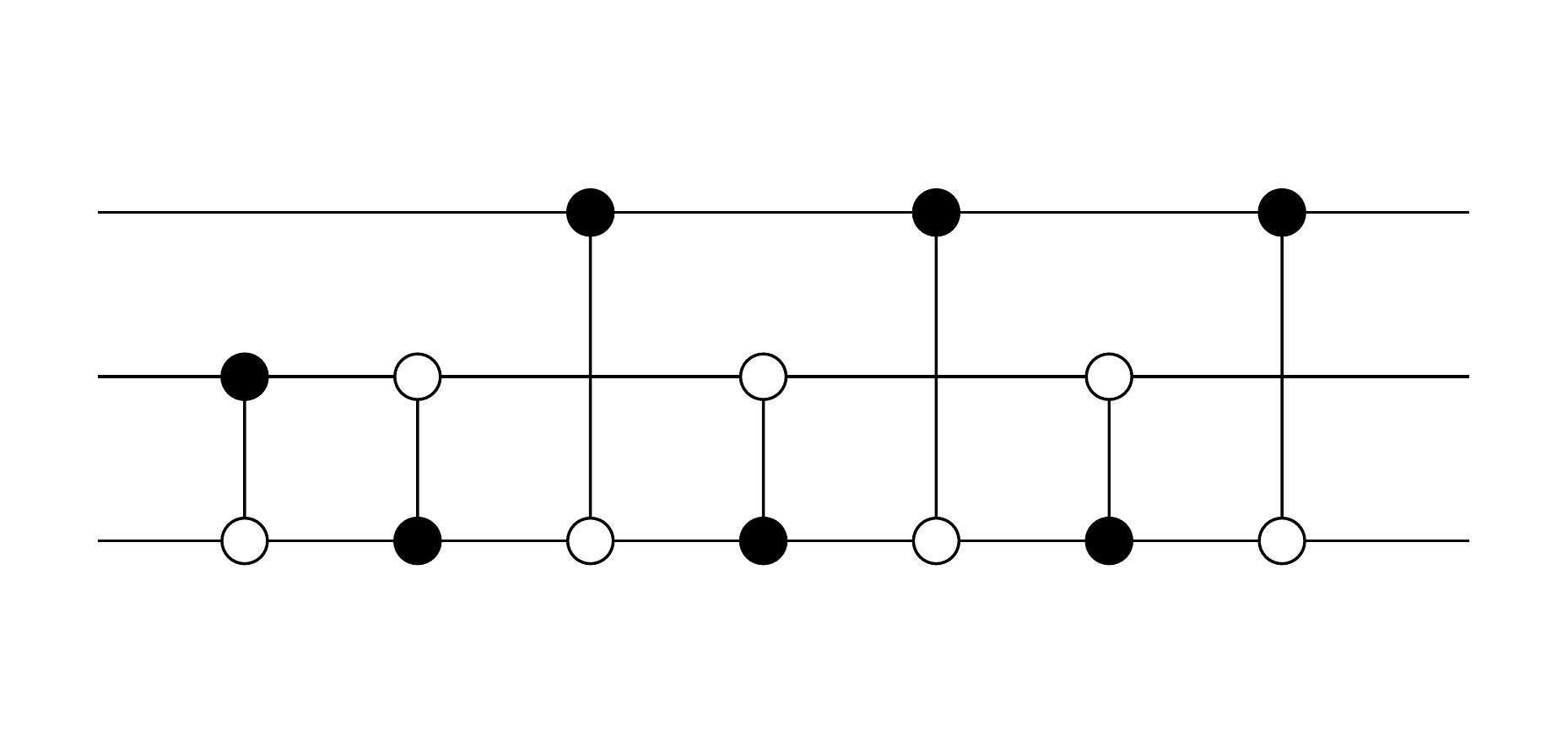}
  \caption{A quantum circuit}\label{circuit01}
\end{center}
\end{figure}
\end{definition}
Next, we define an equivalence relation between two quantum circuits. 
This definition is important and allows us to discuss the equivalence of  circuits.
In this paper, we consider only quantum circuits that are constructed by 3-qubit CNOT gates. We denote as the set of circuits $CQC_3$ as
$$CQC_3 =
 \{([c_1,t_1]_3 ,[c_2,t_2]_3,\cdots ,[c_m,t_m]_3) \vert$$
 $$ c_i,t_i \in \{1,2,3\}, c_i \neq t_i , m \in \mathbb{N}\}.$$
We note that two different circuits $Cir_1$ and $Cir_2$ in  $CQC_3$ , may be equivalent in the sense of $=_{cir}$, i.e. $Cir_1 =_{cir} Cir_2$.
\begin{example}
The following equation can be considered as illustrated in Figure.\ref{circuit02}: 
\begin{equation*}
([1, 2]_3,[2,3]_3) =_{cir} ([2,3]_3,[1,2]_3,[1,3]_3).
\end{equation*}
For all input qubits $|\delta_1 \rangle \otimes |\delta_2 \rangle \otimes |\delta_3 \rangle$,
we need to prove the equivalence of the outputs.
First, we compute $([1,3]_3 \circ [1,2]_3 \circ [2,3]_3) (|\delta_1 \rangle \otimes |\delta_2 \rangle \otimes |\delta_3 \rangle)$,
\begin{eqnarray*}
&&([2,3]_3 \circ [1, 2]_3) (|\delta_1 \rangle \otimes |\delta_2 \rangle \otimes |\delta_3 \rangle)  \\
&=_{cir}& [2,3]_3 (|\delta_1 \rangle \otimes |\delta_1 \oplus \delta_2 \rangle \otimes |\delta_3 \rangle) \\
&=_{cir}& |\delta_1 \rangle \otimes |\delta_1 \oplus \delta_2 \rangle \otimes |\delta_1 \oplus \delta_2 \oplus \delta_3 \rangle. 
\end{eqnarray*}
Next, we compute $([1,3]_3 \circ [1,2]_3 \circ [2,3]_3) (|\delta_1 \rangle \otimes |\delta_2 \rangle \otimes |\delta_3 \rangle)$,
\begin{eqnarray*}
&&([1,3]_3 \circ [1,2]_3 \circ [2,3]_3) (|\delta_1 \rangle \otimes |\delta_2 \rangle \otimes |\delta_3 \rangle)   \\
&=_{cir}&([1,3]_3 \circ [1,2]_3) (|\delta_1 \rangle \otimes |\delta_2 \rangle \otimes |\delta_2 \oplus \delta_3 \rangle)  \\
&=_{cir}&[1,3]_3 (|\delta_1 \rangle \otimes |\delta_1 \oplus \delta_2 \rangle \otimes |\delta_2 \oplus \delta_3 \rangle)  \\
&=_{cir}& |\delta_1 \rangle \otimes |\delta_1 \oplus \delta_2 \rangle \otimes |\delta_1 \oplus \delta_2 \oplus \delta_3 \rangle. 
\end{eqnarray*}
Thus we have $([1, 2]_3,[2,3]_3) =_{cir} ([2,3]_3,[1,2]_3,[1,3]_3)$.
\begin{figure}[h]
\begin{center}
  \includegraphics[width=50mm]{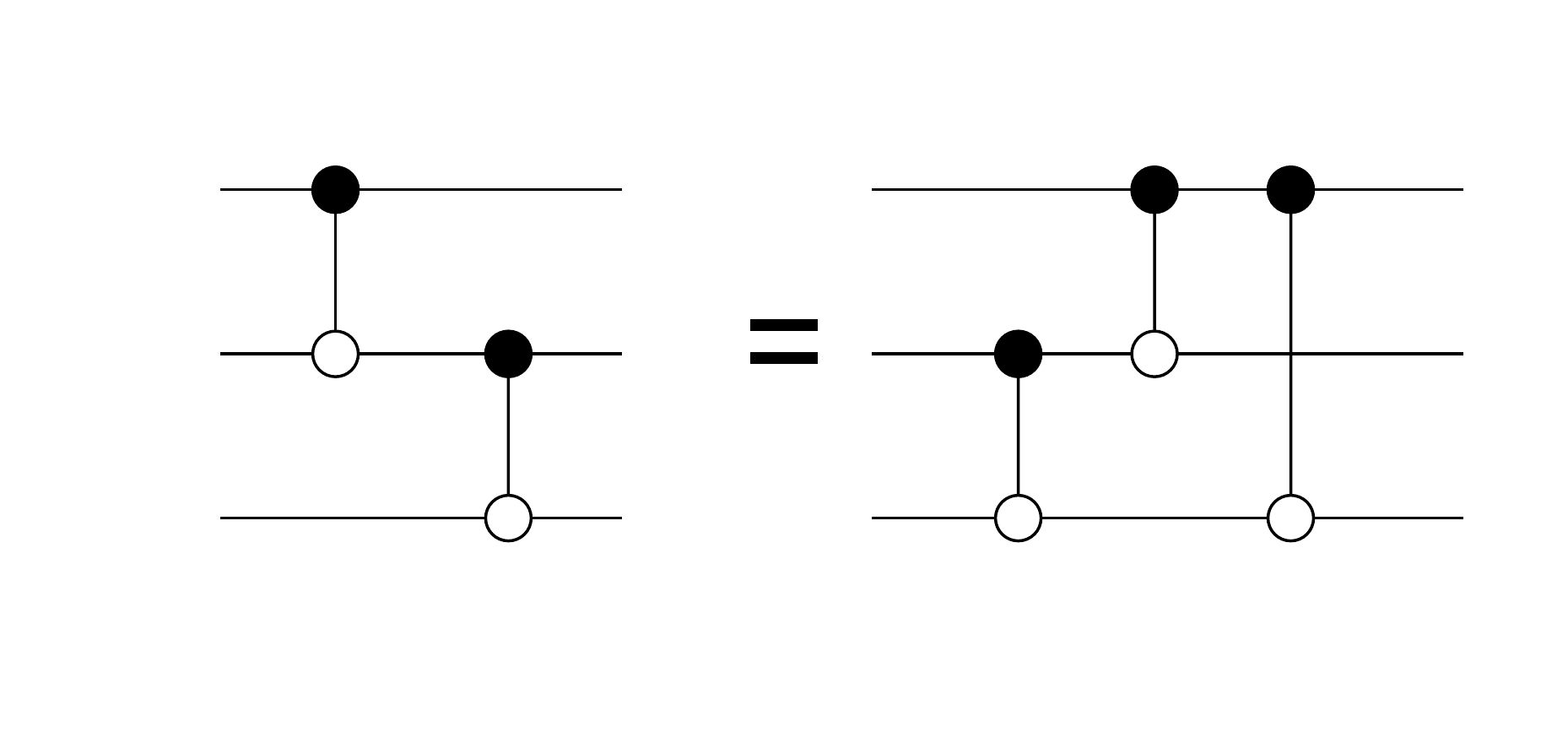}
  \caption{A circuit equation}\label{circuit02}
\end{center}
\end{figure}
\end{example}
We chose three types of simple equations to construct  a string rewriting system. 
\begin{definition} Let $G_1, G_2$, and $G_3 \in CQC_3$ be CNOT gates.
\begin{itemize}
\item For any CNOT gate $G$, $(G,G) =_{cir} \lambda$ is an eliminated equation. \item $(G_1,G_2) =_{cir} (G_2 ,G_1)$ is a commutative equation. 
\item $(G_1,G_2) =_{cir} (G_2 ,G_1,G_3)$ is a anti-commutative equation . \end{itemize}
In this article, we denote six CNOT gates for the 3 qubits $a=[1,2]_3$, $b=[1,3]_3$, $c=[2,1]_3$, $d=[2,3]_3$, $e=[3,1]_3$ and $f=[3,2]_3.$
\begin{figure}[h]
\begin{center}
  \includegraphics[width=40mm]{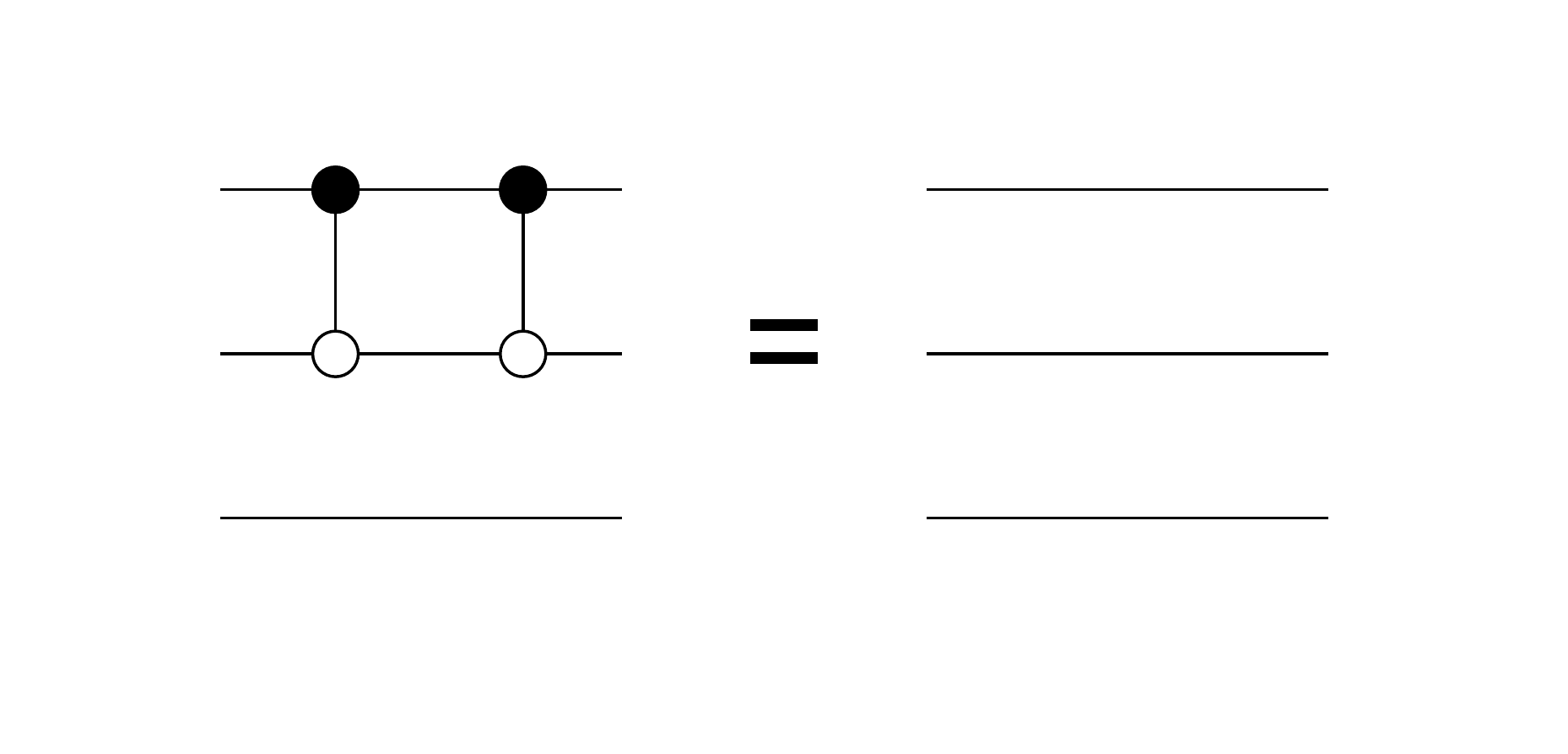}
  \caption{eliminated type: (a , a) = $\lambda$}
\end{center}
\end{figure}
\begin{figure}[h]
\begin{center}
  \includegraphics[width=40mm]{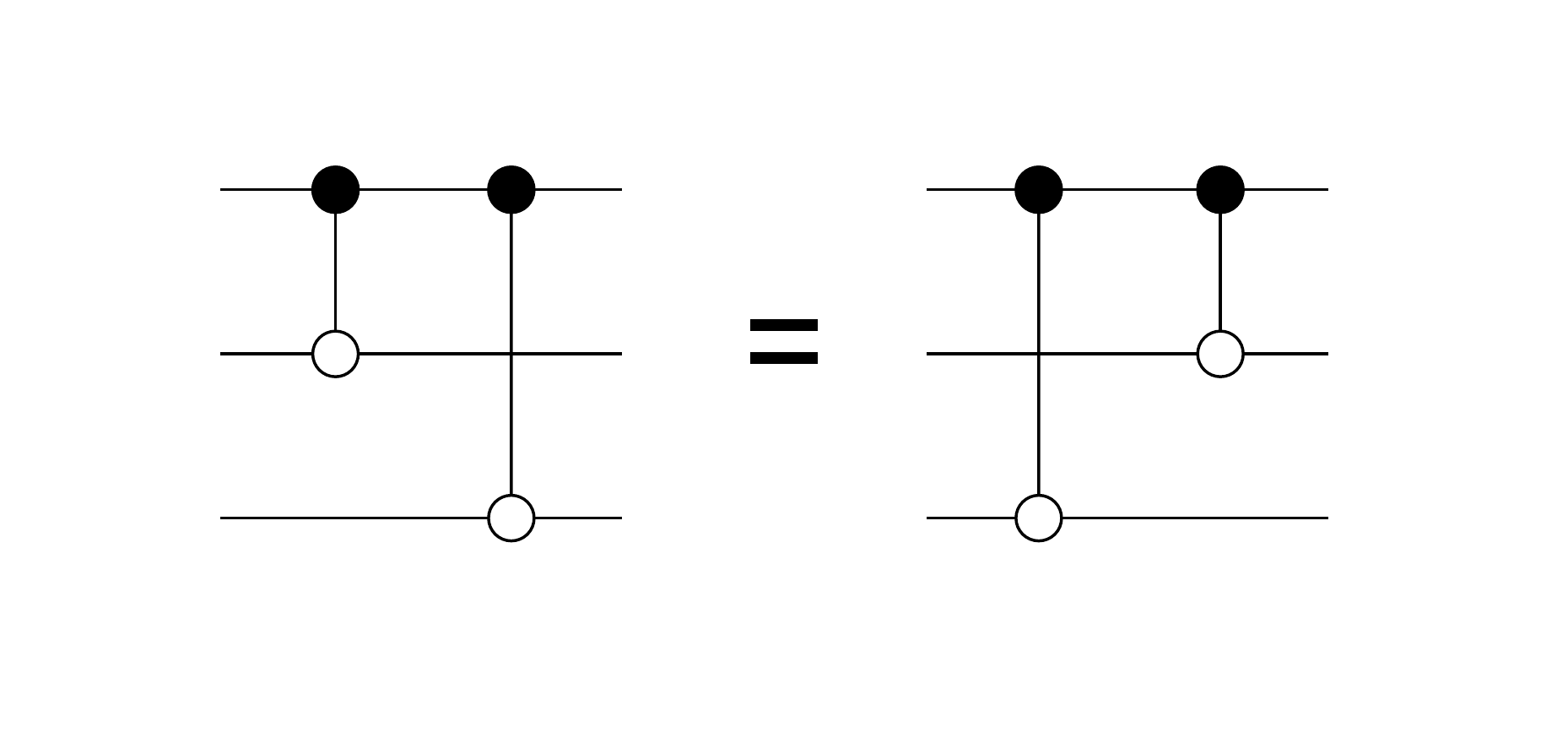}
  \caption{commutative type: (a , b)  = (b , a)}
\end{center}
\end{figure}
\begin{figure}[h]
\begin{center}
  \includegraphics[width=40mm]{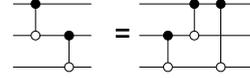}
  \caption{anti-commutative type: $(a, d) =_{cir} (d , a , b)$}
\end{center}
\end{figure}
\end{definition}
%
%
\section{String Rewriting System}
In this section, we introduce the definition of a string rewriting system, in order to discuss quantum circuit transformation systems using it.
Let $\Sigma$ be a finite set of alphabets. We denote the set of all strings over $\Sigma$, including the
empty string $\lambda$, as $\Sigma^*$.
The length of a string $w \in \Sigma^*$ is denoted by $|w|$.
A rewriting rule $(u, v)$ is a pair of strings $u, v \in \Sigma^*$ where $u \neq \lambda$.
\begin{definition}[string rewriting system]
A string rewriting system is a pair $(\Sigma, R)$ of a finite set of alphabet $\Sigma$ and a finite set of rewriting rules $R$ .
\end{definition}
\begin{definition}[string rewriting]
Let $(\Sigma , R)$ be a rewriting system and $s, t \in \Sigma^*$.
We denote $s \to_R t$ if and only if there exist strings $x, y , u$ 
and $ v$ in $\Sigma^*$ such that $s = x u y, t = x v y$ and $ (u, v)\in R$.  \\
The reflexive transitive closure relation of $\to_R$ over $\Sigma^*$ is denoted by $\to_{R}^{*}$. Further $\leftrightarrow _{R}^{*}$ is the symmetric closure relation of $\to_{R}^{*}$.
\end{definition}
\begin{definition}[irreducible, normal form]
Let ($\Sigma$, $R$) be a rewriting system and $w \in \Sigma^*$.
For all substrings $c \subset w$, if there are no rules $(c,c') \in R$, then $w$ is $irreducible$. 
For $s \in \Sigma^*$, if there exists $s'$ such that $s \to ^*_R s'$ and $s'$ is irreducible, then $s'$ is the $normal form$ of $s$.
We denote the normal form $s'$ of $s$  as $NF(s)$.
\end{definition}
The equivalence class of a string rewriting systems are
considered using monoids, so we introduce several definitions and properties about monoids and their interprelations.
\begin{definition}[monoid]
A monoid $M = (M, \cdot, \lambda)$ is a tuple of a set $M$, a binary operation $\cdot : M \times M \to M$, and a unit element $e \in M$
 that satisfies the following two axioms.
 \begin{itemize}
\item For any $a, b$, and $c$ in $M, (a \cdot b) \cdot c = a \cdot (b \cdot c)$.
\item For any $a$ in $M$, $a \cdot \lambda = \lambda \cdot a = a$.
\end{itemize}
We note $(\Sigma ^*, \cdot , \lambda)$ is a monoid where $\cdot$ is concatenation and $\lambda$ is an empty string.
\end{definition}
\begin{definition}[homomorphism, isomorphic]A homomorphism between two monoids $(M_1, \cdot_1, \lambda_1)$ and $(M_2, \cdot_2, \lambda_2)$ is a function $f : M_1 \to M_2$ such that
\begin{itemize}
\item $f(x \cdot_1 y) = f(x) \cdot_2 f(y)$ for any $x , y \in M_1$, and
\item $f(\lambda_1) = \lambda_2$.
\end{itemize}
If there exists a bijective homomorphism $f : M_1 \to M_2$, then $M_1$ and $M_2$ are isomorphic. We denote isomorphic as $M_1 \sim M_2$.
\end{definition}
\begin{proposition} Let $\Sigma$ be a finite set and $(M,\cdot ,\lambda)$ a monoid. A function $f : \Sigma \to M$ is uniquely extended to the homomorphism $f^* : \Sigma^* \to M$ where $f^{*}(x_1\cdot x_2 \cdots x_n) = f(x_1) f(x_2) \cdots f(x_n)$ and
$f^{*}(\lambda) = \lambda$.
\end{proposition}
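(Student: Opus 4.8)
The plan is to construct $f^{*}$ explicitly from the stated formula and then verify, in order, that it is well defined and extends $f$, that it is a monoid homomorphism, and finally that it is the only such homomorphism. First I would define $f^{*}(\lambda)=\lambda$ and, for a nonempty string $w$, write $w=x_{1}x_{2}\cdots x_{n}$ with each $x_{i}\in\Sigma$ and set $f^{*}(w)=f(x_{1})\cdot f(x_{2})\cdots f(x_{n})$, the product being taken in $M$. This assignment is well defined because every element of $\Sigma^{*}$ has a unique expression as a concatenation of letters of $\Sigma$ — this is exactly what it means to be a string over $\Sigma$ — and because the $n$-fold product in $M$ is unambiguous by associativity. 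Identifying each letter $x\in\Sigma$ with the length-one string $x\in\Sigma^{*}$, the formula specialises to $f^{*}(x)=f(x)$, so $f^{*}$ does extend $f$.

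Next I would check the two homomorphism conditions. The condition $f^{*}(\lambda)=\lambda$ holds by definition. For multiplicativity, take $u=x_{1}\cdots x_{m}$ and $v=y_{1}\cdots y_{n}$ in $\Sigma^{*}$, allowing $m=0$ or $n=0$ so that $u$ or $v$ may be $\lambda$. Then $u\cdot v=x_{1}\cdots x_{m}y_{1}\cdots y_{n}$, and
\begin{align*}
f^{*}(u\cdot v) &= f(x_{1})\cdots f(x_{m})\,f(y_{1})\cdots f(y_{n}) \\
&= \bigl(f(x_{1})\cdots f(x_{m})\bigr)\cdot\bigl(f(y_{1})\cdots f(y_{n})\bigr) \\
&= f^{*}(u)\cdot f^{*}(v),
\end{align*}
where the middle step is associativity in $M$, together with the unit law $a\cdot\lambda=\lambda\cdot a=a$ in the case that $u$ or $v$ is empty. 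Hence $f^{*}$ is a homomorphism $\Sigma^{*}\to M$ with the required properties.

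For uniqueness I would argue by induction on $|w|$ that any homomorphism $g:\Sigma^{*}\to M$ with $g|_{\Sigma}=f$ satisfies $g(w)=f^{*}(w)$. If $|w|=0$ then $g(w)=g(\lambda)=\lambda=f^{*}(w)$ because $g$ is a homomorphism. If $|w|=n+1$, write $w=x_{1}w'$ with $x_{1}\in\Sigma$ and $|w'|=n$; then
\begin{align*}
g(w) &= g(x_{1})\cdot g(w') = f(x_{1})\cdot f^{*}(w') \\
&= f^{*}(x_{1})\cdot f^{*}(w') = f^{*}(x_{1}w') = f^{*}(w),
\end{align*}
using in turn that $g$ is a homomorphism, the induction hypothesis, the agreement of $g$, $f$ and $f^{*}$ on $\Sigma$, and the multiplicativity of $f^{*}$ already established. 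Thus $g=f^{*}$.

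There is no deep obstacle here; the only points that need genuine care are the well-definedness of the defining formula, which rests on the unique factorisation of a string into its letters (i.e.\ the freeness of $\Sigma^{*}$), and the consistent treatment of $\lambda$ as the empty product, which must be remembered both in the definition of $f^{*}$ and in the base case of the uniqueness induction.
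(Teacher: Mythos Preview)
Your proof is correct and complete; it is the standard verification that $\Sigma^{*}$ is the free monoid on $\Sigma$. The paper itself offers no proof at all for this proposition---it merely states the result and places a $\qed$ symbol, treating it as a well-known fact---so your argument supplies everything the paper omits rather than diverging from any approach the paper takes.
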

\hfill $\qed$
\begin{definition}[model, interpretation]
Let $(\Sigma, R)$ be a rewriting system and $(M, \cdot , \lambda)$ a monoid.
We say $(M, \cdot , \lambda)$ is a model of $(\Sigma, R)$ if there exists a function $f : \Sigma \to M$ such that $f^*(u) = f^*(v)$ for any $(u, v) \in R$.  We call the function $f^*$ an interpretation of the string rewriting system $(\Sigma, R)$ to a monoid $M$. 
\end{definition}
A string rewriting system can be investigated using a monoid and interpretation.
We now add further definitions for discussing the equivalence of rewriting systems.
\begin{definition}[factor monoid]
Let $(\Sigma, R)$ be a rewriting system. A factor monoid $(\Sigma ^* / R, \cdot , [\lambda])$ is defined by 
$\Sigma^* / R = \Sigma^* / \leftrightarrow^{*}_{R}$ and  $[x] \cdot [y] = [x y]$ where $[x] = \{x' | x \leftrightarrow_{R}^{*} x' \}$
\end{definition}
\begin{proposition}
Let $(\Sigma, R)$ be a rewriting system, $(M, \cdot, \lambda)$ a model of $R$ and $f^* : \Sigma^{*} \to M$ an interpretation. 
The function $[f^*] : \Sigma^* /R \to M$ defined by $[f^*]([x]) = [f^*(x)] (x \in \Sigma^*)$ is a homomorphism.
\hfill $\qed$
\end{proposition}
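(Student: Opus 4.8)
The plan is to verify exactly the two things that ``homomorphism between monoids'' requires of the map $[f^*]$: that it is a well-defined function on $\Sigma^*/R$, and that it respects the binary operation and the unit. Reading the value $[f^*]([x]) = [f^*(x)]$ as the element $f^*(x) \in M$, the genuinely non-trivial point is well-definedness, since the domain $\Sigma^*/R$ consists of equivalence classes while the proposed value is stated through a representative; I must check it is independent of the choice of representative. Everything after that is bookkeeping with the definitions of the factor monoid and of $f^*$.

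First I would isolate the key lemma: $f^*$ is constant on each $\leftrightarrow_R^*$-class, i.e.\ $x \leftrightarrow_R^* y$ implies $f^*(x) = f^*(y)$. I would prove the one-step version first. If $x \to_R y$ then by definition $x = s\,u\,t$ and $y = s\,v\,t$ for some $s,t \in \Sigma^*$ and $(u,v) \in R$; since $f^*$ is a monoid homomorphism (the earlier Proposition on the unique extension of $f:\Sigma\to M$) and since $(M,\cdot,\lambda)$ is a \emph{model} of $R$ — which is precisely the hypothesis $f^*(u) = f^*(v)$ — we get
$f^*(x) = f^*(s)\cdot f^*(u)\cdot f^*(t) = f^*(s)\cdot f^*(v)\cdot f^*(t) = f^*(y)$.
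Now $\leftrightarrow_R^*$ is the reflexive–symmetric–transitive closure of $\to_R$, and equality on $M$ is itself reflexive, symmetric and transitive; having shown that $\to_R$ is contained in the kernel relation $\{(p,q) : f^*(p) = f^*(q)\}$, a short induction on the length of a derivation witnessing $x \leftrightarrow_R^* y$ yields $f^*(x) = f^*(y)$. Well-definedness follows: if $[x] = [y]$, then $x \leftrightarrow_R^* y$, hence $f^*(x) = f^*(y)$, hence $[f^*]([x]) = [f^*]([y])$.

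Next I would check the homomorphism axioms directly. For the product: by the definition of the factor monoid, $[x]\cdot[y] = [xy]$, so
$[f^*]([x]\cdot[y]) = [f^*]([xy]) = f^*(xy) = f^*(x)\cdot f^*(y) = [f^*]([x])\cdot[f^*]([y])$,
where the third equality is again the homomorphism property of $f^*$. For the unit, $[f^*]([\lambda]) = f^*(\lambda) = \lambda$. Hence $[f^*]$ is a monoid homomorphism.

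I expect the only place requiring real care is the inductive step for $\leftrightarrow_R^*$ — in particular being explicit that reflexivity and symmetry of $=$ on $M$ are what let the one-step statement ``bootstrap'' to the full closure — together with being clear that ``$(M,\cdot,\lambda)$ is a model of $R$'' is exactly the ingredient that makes the one-step case go through. No other obstacle is anticipated.
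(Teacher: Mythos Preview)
Your argument is correct and complete: you establish well-definedness by showing $f^*$ is constant on $\leftrightarrow_R^*$-classes (via the one-step case plus closure), and then verify the two monoid-homomorphism axioms directly. The paper itself offers no proof of this proposition at all --- it is stated with an immediate $\qed$ --- so there is nothing to compare against; your write-up simply supplies the standard details the paper left implicit.
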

\begin{definition}[rewriting system equivalence]
Let $(\Sigma, R_1)$ and $(\Sigma, R_2)$ be rewriting systems. $R_1$ and $R_2$ are equivalent if and only if $\Sigma^* /R_1$ and $\Sigma^* /R_2$ are isomorphic. 
\end{definition}
Finally, we introduce a lemma to compare two rewriting system that have the same alphabet $\Sigma$.
\begin{lemma}\label{interpretation}
Let $(\Sigma, R_1)$ and $(\Sigma, R_2)$ be rewriting systems, and let $(M, \cdot, \lambda)$ be a monoid of $(\Sigma, R_2)$.
If there exists $(x_1, x_2) \in R_1$ and an interpretation $f : \Sigma^*  \to M$ for $\Sigma^* /R_2$ such that $f^*(x_1) \neq f^*(x_2)$,
then $\Sigma^* / R_1 \nsim \Sigma^* /R_2$.
\end{lemma}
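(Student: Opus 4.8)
The plan is to prove the contrapositive by a short argument phrased in terms of the Thue congruences $\leftrightarrow_{R_1}^{*}$ and $\leftrightarrow_{R_2}^{*}$. Assume, toward a contradiction, that $\Sigma^{*}/R_1 \sim \Sigma^{*}/R_2$. Since both rewriting systems are presented over the \emph{same} alphabet $\Sigma$, the consequence of this equivalence that I would use is that the identity on $\Sigma$ induces the isomorphism; equivalently, the canonical projections $\pi_{R_i}\colon \Sigma^{*}\to \Sigma^{*}/R_i$ are intertwined by an isomorphism $\phi$ with $\phi\circ\pi_{R_1}=\pi_{R_2}$, which is the same as saying that $x\leftrightarrow_{R_1}^{*}y$ holds exactly when $x\leftrightarrow_{R_2}^{*}y$ holds, for all $x,y\in\Sigma^{*}$.

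With this in hand the contradiction is immediate. First I would use the hypothesis $(x_1,x_2)\in R_1$: a single application of $\to_{R_1}$ sends $x_1$ to $x_2$, so $x_1\leftrightarrow_{R_1}^{*}x_2$, and hence $x_1\leftrightarrow_{R_2}^{*}x_2$ by the previous paragraph. Next I would invoke the hypothesis that $M$ is a model of $(\Sigma,R_2)$ with interpretation $f^{*}$: by the proposition stating that $[f^{*}]\colon\Sigma^{*}/R_2\to M$, $[f^{*}]([x])=f^{*}(x)$, is a well-defined homomorphism, the map $f^{*}$ takes a single value on each $\leftrightarrow_{R_2}^{*}$-class. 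Applying this to the common class of $x_1$ and $x_2$ gives $f^{*}(x_1)=f^{*}(x_2)$, contradicting the assumption $f^{*}(x_1)\neq f^{*}(x_2)$. Therefore $\Sigma^{*}/R_1\nsim\Sigma^{*}/R_2$.

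I expect the only genuine obstacle to be the first step: one must commit to reading ``$\Sigma^{*}/R_1\sim\Sigma^{*}/R_2$'' as an isomorphism compatible with the shared alphabet, i.e.\ as equality of the two generated congruences, rather than as a bare isomorphism of abstract monoids. This is the intended meaning of rewriting-system equivalence here (both systems live over $\Sigma$, and equivalence of presentations means the congruences agree), and some such reading is unavoidable: a purely abstract isomorphism is too weak for the lemma, since, for example, $R_1=\{(a^{2},\lambda),(bab,\lambda)\}$ and $R_2=\{(b^{2},\lambda),(aba,\lambda)\}$ over $\Sigma=\{a,b\}$ both present $\mathbb{Z}/4\mathbb{Z}$ while generating different congruences on $\Sigma^{*}$. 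Once the first step is pinned down, everything that remains is routine unwinding of the definitions of model, interpretation, and factor monoid together with the propositions already established.
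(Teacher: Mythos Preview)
The paper gives no proof of this lemma at all: it is stated and immediately followed by a $\qed$ symbol. So there is nothing to compare at the level of argument, and your contrapositive proof via the well-definedness of $[f^{*}]$ on $\Sigma^{*}/R_2$ is exactly the routine unwinding the authors presumably had in mind.

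More importantly, your closing paragraph pinpoints a real defect in the paper rather than a mere obstacle in your write-up. With the paper's own Definition of rewriting-system equivalence (``$R_1$ and $R_2$ are equivalent iff $\Sigma^{*}/R_1$ and $\Sigma^{*}/R_2$ are isomorphic''), the lemma is \emph{false} as stated, and your $\mathbb{Z}/4\mathbb{Z}$ counterexample over $\{a,b\}$ shows this cleanly: take $M=\Sigma^{*}/R_2$ with $f^{*}$ the quotient map; then $(a^{2},\lambda)\in R_1$ but $f^{*}(a^{2})=[a]^{2}\neq[\lambda]$, while $\Sigma^{*}/R_1\cong\Sigma^{*}/R_2$. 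The only reading under which the lemma holds is the one you adopt, namely that the isomorphism is induced by the identity on $\Sigma$, i.e.\ that the two Thue congruences coincide. This is also how the paper actually \emph{uses} the lemma: every application (Proposition~\ref{mainprop2} and the subsequent lemma) concludes with an inequality ``$\Sigma^{*}/R_{E'}\neq\Sigma^{*}/R_{E_6}$'' rather than ``$\nsim$'', and the equation sets being compared are nested, so the intended content is precisely that the congruences differ. Your proof is therefore correct for the lemma the paper needs, and your diagnosis that the stated definition of $\sim$ is too weak is accurate.
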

\hfill $\qed$
%
%
%
%
\section{Quantum circuit rewriting system}
We define a quantum circuit rewriting system for $CQC_3$.
\begin{definition}[Quantum circuit rewriting system]
Let $(\Sigma,R)$ be a string rewriting system and $i^* : \Sigma^* \to CQC_3/=_{cir}$ a function where
$\Sigma=\{a,b,c,d,e,f\}$, $i(a) = [1,2]_3$, $i(b) = [1,3]_3$, $i(c) = [2,1]_3$, $i(d) = [2,3]_3$, $i(e) = [3,1]_3$ and $i(f) = [3,2]_3$.
$(\Sigma, R)$ is a quantum circuit rewriting system, if $i^*$ is an interpretation of $(\Sigma, R)$.
We identify a string $w = x_1x_2\cdots x_n \in \Sigma^*$ as a circuit $(i(x_1), i(x_2), \cdots , i(x_n)) \in CQC_3$, and we also call $w$ a $circuit$.
\end{definition}
In general, string rewriting system do not have properties of `termination' and `confluence'.
So we would like to construct a quantum circuit rewriting system that has both properties termination and confluence.
To do so, we use the Knuth-Bendix algorithm \cite{KnBe70, Book, YM83}.
\begin{definition}
Let  $E$ be an equation set.
If the Knuth-Bendix algorithm succeeds for $E$, then
we have a complete transformation rule set $R$ (i.e., a set of transformation rules with the properties of termination and confluence). 
We denote $KBA(E)$  as the result of the Knuth-Bendix algorithm  for $E$.
\end{definition}
\begin{example}
Let $A$ be an equation set s.t. 
\begin{eqnarray*}
A&=&
\begin{Bmatrix}
aa = \lambda,& baba = abab, &dbd = bdb,\\
 bb =  \lambda,&dbabd = abab, & da = ad, \\
dd = \lambda& & 
\end{Bmatrix}.
\end{eqnarray*}
We can compute $KBA(A)$:
\begin{eqnarray*}
KBA(A)&=&
\begin{Bmatrix}
aa \to \lambda,& ababd\to dbab,\\
abadb\to bdba,& abdba\to badb,\\
adbab\to babd,& baba\to abab,\\
babdb\to adba,& badba\to abdb,\\
bb \to \lambda,& bdbab\to abad,\\
da\to ad,& dbabd\to abab,\\
dbad\to bdba,& dbd\to bdb,\\
 dd \to \lambda
\end{Bmatrix}.
\end{eqnarray*}
\end{example}
Next, we apply the Knuth-Bedix completion algorithm  to 18 equations 
\begin{equation}\label{e0}
E_{all} =
\begin{Bmatrix}
aa=\lambda,& fbfb=a,& ab=ba\\
bb=\lambda,& adad=b,& bd=db\\
cc=\lambda,& dede=c,& cd=dc\\
dd=\lambda,& bcbc=d.& ce=ec\\
ee=\lambda,& fcfc=e,& af=fa\\
ff=\lambda,& eaea=f, & ef = fe
\end{Bmatrix}
\end{equation}
introduced by Iwama et. al. 2002 \cite{Iwama}.
We note that anti-commutative equations $xy = yxz$  $(x, y$ and $z \in \Sigma)$ equivalent to $xyxy = z$ $(xyxy = xyyxz = z)$.
We also call $xyxy = z$ $(x, y$ and $z \in \Sigma)$ anti-commutative equations. 
We used the $Mathmatica$ software to compute the complete transformation rule set $KBA(E_{all})$, and we list it in the Appendix.
The number of elements of $KBA(E_{all})$ is 114.
$$|KBA(E_{all})| = 114.$$
We note that we have applied an extended Knuth-Bendix algorithm which produce irreducible transformation rule set introduced in \cite{YM83}.
The transformation rule set $KBA(E_{all})$ is irreducible transformation rule set.
The number of rules obtained by the original Knuth-Bendix algorithm is 244.
A string rewriting system $(\Sigma, R_{E_{all}})$ is thus defined where $R_{E_{all}} = KBA(E_{all})$.

We would like to investigate commutativity of $E_{all}$.
\begin{lemma}\label{comlemma}  We prove the following equations.
\begin{enumerate}
\item $(acac, ca) \in R_{E_{all}}$,
\item $(bebe, eb) \in R_{E_{all}}$ \mbox{and}
\item $(dfdf, fd) \in R_{E_{all}}$.
\end{enumerate}
\begin{proof}\item 
\begin{enumerate}
\item First, we show $fbca = caed$.
\begin{eqnarray*}
f(bc)a &=& f(cbd)a\\
&=& cfebda\\
&=& cfebbad\\
&=& cffaed\\
&=& caed.
\end{eqnarray*}
Since $fbca = caed$ and $fbfb = a$, 
\begin{eqnarray*}
acac &=& (fbfb)cac\\
&=& fb(caed)c\\
&=& (caed)edc\\
&=& ca(eded)c\\
&=& cacc\\
&=& ca.
\end{eqnarray*}
\item We can prove $bebe = eb$ by the same method to prove $acac = ca$.
We rewrite $a \to b$, $b \to d$, $c \to e$, $d \to c$, $e \to f$ and $f \to a$ in the proof of $acac = ca$.
\item We can prove $dfdf = fd$ by the same method to prove $acac = ca$.
We rewrite $a \to d$, $b \to c$, $c \to f$, $d \to e$, $e \to a$ and $f \to b$ in the proof of $acac = ca$.
\end{enumerate}
\end{proof}
\end{lemma}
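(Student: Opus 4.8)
\emph{Proof strategy.}
The plan is to carry out the three verifications inside the monoid $M$ presented by $E_{all}$, i.e.\ the quotient of $\Sigma^{*}$ by the congruence generated by the eighteen equations; by correctness of the Knuth--Bendix procedure this is the same monoid as the one presented by $R_{E_{all}}=KBA(E_{all})$. Since $R_{E_{all}}$ is terminating and confluent and each of $ca$, $eb$, $fd$ is already irreducible, proving the identities $acac=ca$, $bebe=eb$, $dfdf=fd$ in $M$ is equivalent to the reductions $acac\to^{*}_{R_{E_{all}}}ca$ etc.\ (and hence to the membership claims, the corresponding rules being among those listed in the Appendix). It is worth recording that, using $aa=bb=cc=dd=ee=ff=\lambda$, each of these is equivalent to a braid relation: $acac=ca\iff aca=cac$, $bebe=eb\iff beb=ebe$, $dfdf=fd\iff dfd=fdf$; so what we are really showing is that the two CNOT gates on each fixed pair of wires ($\{a,c\}$ on wires $1,2$, $\{b,e\}$ on $1,3$, $\{d,f\}$ on $2,3$) satisfy a braid relation.

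For part (1) I would first derive, from the commutative and anti-commutative equations of $E_{all}$ alone, the elementary rewrites $bc=cbd$ (from $bcbc=d$), $fc=cfe$ (from $fcfc=e$, $ce=ec$, $ef=fe$), $da=bad$ (from $adad=b$, $ab=ba$), $ea=fae$ (from $eaea=f$) and $eded=c$ (from $dede=c$, $ce=ec$). Chaining the first four, together with one cancellation of $bb$ and one of $ff$, rewrites $fbca$ successively to $fcbda$, $cfebda$, $cfebbad$, $cfead$, $cffaed$ and finally $caed$, giving the auxiliary identity $fbca=caed$. Then, using $fbfb=a$ to expand the leading letter and applying the auxiliary identity twice (and reassociating), $acac=(fbfb)cac=fb(fbca)c=fb(caed)c=(fbca)edc=(caed)edc=ca(eded)c=cacc=ca$.

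For parts (2) and (3) I would avoid repeating the computation and use symmetry instead. The $6$-cycle $\sigma=(a\,b\,d\,c\,e\,f)$ on $\Sigma$ (i.e.\ $a\mapsto b\mapsto d\mapsto c\mapsto e\mapsto f\mapsto a$) carries each of the eighteen equations of $E_{all}$ either to another equation of $E_{all}$ or to an immediate consequence of one (the six eliminated equations and the six commutative equations are permuted among themselves, and the six anti-commutative equations go to relations of the form $(xy)^{2}=z$ that hold in $M$), so $\sigma$ extends to an automorphism of $M$; likewise $\sigma^{2}=(a\,d\,e)(b\,c\,f)$. Since $\sigma(a)=b$, $\sigma(c)=e$ and $\sigma^{2}(a)=d$, $\sigma^{2}(c)=f$, applying $\sigma$ (resp.\ $\sigma^{2}$) to the identity $acac=ca$ of part (1) yields exactly $bebe=eb$ (resp.\ $dfdf=fd$); this is, in essence, the ``rewrite $a\to b$, $b\to d,\dots$'' reduction to part (1).

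The step I expect to be the real obstacle is the first half of (1): nothing forces one to aim for the intermediate word $caed$, and the device of replacing the single letter $a$ by $fbfb$ — so that the $bc$-anti-commutation and the $ce/ef$-commutations can act — is the non-routine idea; once $fbca=caed$ is in hand the rest is bookkeeping. The only other point needing care is that $\sigma$ really induces a well-defined automorphism of $M$, which reduces to checking the eighteen images, a routine if slightly tedious task. (As a sanity check one could instead verify all three identities mechanically by reducing $acac$, $bebe$, $dfdf$ to normal form with $R_{E_{all}}$; the value of the derivation above is that it exhibits these braid relations as consequences of Iwama's eighteen equations, which is what is needed for the minimality analysis of Section 5.)
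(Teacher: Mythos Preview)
Your proof is correct and follows essentially the same route as the paper: the derivation of $fbca=caed$ and the subsequent reduction of $acac$ to $ca$ match the paper's computation step for step, and your substitutions $\sigma$ and $\sigma^{2}$ for parts (2) and (3) are exactly the letter-rewritings the paper invokes (``rewrite $a\to b$, $b\to d,\dots$'' and ``$a\to d$, $b\to c,\dots$''). Your framing of these substitutions as an automorphism of $M$, and the braid-relation remark, add useful context but do not change the underlying argument.
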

Similarly, we obtain the following corollary.
\begin{corollary}\label{comcor}\item
\begin{enumerate}
\item $(caca, ac) \in R_{E_{all}}$,
\item $(ebeb, be) \in R_{E_{all}}$ \mbox{and}
\item $(fdfd, df) \in R_{E_{all}}$.
\end{enumerate}
\begin{proof}\item 
\begin{enumerate}
\item Since $acac = ca$ and $cc = \lambda$, we have 
\begin{eqnarray*}
caca &=& caca(cc)\\
&=& c(acac)c\\
&=& c(ca)c\\
&=& ac.
\end{eqnarray*}
\item 3. Similarly, we can prove.
\end{enumerate}
\end{proof}
\end{corollary}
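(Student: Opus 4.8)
The plan is to obtain the three identities of this corollary directly from the matching items of Lemma~\ref{comlemma}, using only the involutive relations $cc=\lambda$, $ee=\lambda$, $ff=\lambda$. These hold in the factor monoid $\Sigma^*/R_{E_{all}}$ because $R_{E_{all}}=KBA(E_{all})$ is, by construction of the Knuth-Bendix completion, equivalent to the rewriting system generated by $E_{all}$, and all three are among the equations listed in~(\ref{e0}). The observation driving the argument is that, for a generator $x$ with $xx=\lambda$, the word $xyxy$ is the $x$-conjugate of $yxyx$ inside $\Sigma^*/R_{E_{all}}$; hence once the ``inner'' equation $yxyx=yx$ is known from Lemma~\ref{comlemma}, conjugating it by $x$ and cancelling the outermost letters collapses $xyxy$ to $yx$.

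Carrying this out for item~(1), with $acac=ca$ supplied by item~(1) of Lemma~\ref{comlemma} and $cc=\lambda$, I would write
\begin{eqnarray*}
caca &=& caca\,(cc)\\
&=& c\,(acac)\,c\\
&=& c\,(ca)\,c\\
&=& (cc)\,ac\\
&=& ac,
\end{eqnarray*}
so $(caca,ac)\in R_{E_{all}}$. For items~(2) and~(3) I would not repeat the computation but reuse the two relabellings of $\Sigma$ already employed in the proof of Lemma~\ref{comlemma}: applying $a\mapsto b,\ b\mapsto d,\ c\mapsto e,\ d\mapsto c,\ e\mapsto f,\ f\mapsto a$ to the display above (together with item~(2) of Lemma~\ref{comlemma} and $ee=\lambda$) gives $ebeb=be$, and applying $a\mapsto d,\ b\mapsto c,\ c\mapsto f,\ d\mapsto e,\ e\mapsto a,\ f\mapsto b$ (together with item~(3) of Lemma~\ref{comlemma} and $ff=\lambda$) gives $fdfd=df$.

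I do not expect a genuine obstacle here: the real work was done in Lemma~\ref{comlemma}, and the corollary is essentially bookkeeping. The one point that deserves a line of justification is that the two relabellings are symmetries of $E_{all}$ (equivalently, automorphisms of $\Sigma^*/R_{E_{all}}$), so that they transport proved identities to proved identities; this is the same symmetry already exploited in Lemma~\ref{comlemma}, so granting it the argument is complete. If one preferred to avoid the symmetry reasoning altogether, each of the three identities could instead be verified mechanically by reducing both of its sides to normal form with the $114$ rules of $R_{E_{all}}$, which terminates and is confluent by construction.
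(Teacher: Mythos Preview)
Your proof is correct and follows essentially the same approach as the paper: the computation for item~(1) is identical (you merely make the cancellation $(cc)ac=ac$ explicit), and for items~(2) and~(3) your use of the relabellings amounts to exactly what the paper means by ``Similarly, we can prove,'' namely rerunning the same three-line computation with $(b,e)$ and $(d,f)$ in place of $(a,c)$, invoking the corresponding items of Lemma~\ref{comlemma} together with $ee=\lambda$ and $ff=\lambda$. Your added remarks about the relabellings being symmetries of $E_{all}$ and the mechanical-reduction alternative are sound but go beyond what the paper supplies.
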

By Lemma \ref{comlemma} and Corollary \ref{comcor}, we have the complete table of $xyxy$ for $\Sigma^*/E_{all}$.
\begin{table}[!h]  \centering
\hspace{.cm}
\begin{tabular}{|r|rrrrrr|}
\hline
\backslashbox{x}{y} & a & b & c & d & e & f\\
\hline
    a & $\lambda$ & $\lambda$ & ca & b & f & $\lambda$ \\
    b & $\lambda$ & $\lambda$ & d & $\lambda$ & eb & a\\
    c & ac & d & $\lambda$ & $\lambda$ & $\lambda$ & e\\
    d & b & $\lambda$ & $\lambda$ & $\lambda$ & c & fd\\
    e & f & be & $\lambda$ & c & $\lambda$ & $\lambda$\\
    f & $\lambda$ & a & e & df & $\lambda$ & $\lambda$\\
    \hline
\end{tabular}
\caption{$xyxy$ for $\Sigma^*/E_{all}$}\label{commutatorall}\label{com}
\end{table}
\begin{example}
We show an equation $(ebe, beb) \in \Sigma/R_{E_{all}}$. 
Since $ebe = ebe(bb) = (ebeb)b = be$, we have $(ebe, beb) \in \Sigma/R_{E_{all}}$.
We note that the rewriting rule $ebe \to beb$ appears on the last 6 line of Appendix.
\end{example}
\begin{proposition} 
Let $(\Sigma, R_{E_{all}})$ be a quantum circuit rewriting system where $E_{all}$ a set of equations defined by \mbox{(\ref{e0})}.
Then we have followings;
\begin{enumerate}
\item $|NF(w)| \le 6, (w \in \Sigma^7),$
\item $|NF(w)| \le 6, (w \in \Sigma^*)$, and
\item $|\Sigma^* / R_{E_{all}}| = 168.$
\end{enumerate}
That is the length of $NF(w)$ is at most 6 for any string $w \in \Sigma^*$ and the number of normal forms is 168.
\begin{proof}\item
\begin{enumerate}
\item We compute the $normal form$ for any string $w \in \Sigma^7$, then we have the length of $normal form$ is at most 6.
\item  For any string $w \in \Sigma^*$ which length is $n \ge 7$, $w$ contain a substring which length is 7.
Thus $w$ is rewritten to $w'$ which length is at most $n-1$.
Inductively, for all string $w \in \Sigma^*$, the length of $NF(w)$ is at most 6.
\item  We compute the $normal form$ for any string $w \in \Sigma^k$ $(1 \le k \le 6)$.
So we have all members of $\Sigma^*/R_{E_{all}}$ and we have $|\Sigma^*/R_{E_{all}}|=168$.
\end{enumerate}
\end{proof}
\end{proposition}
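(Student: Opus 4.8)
Write $R = R_{E_{all}}$. Everything rests on the fact that $R = KBA(E_{all})$ is complete, i.e. terminating and confluent, which is precisely what the success of the Knuth--Bendix procedure asserts; consequently $NF(w)$ is a well-defined function of $w \in \Sigma^{*}$, every $\leftrightarrow_R^{*}$-class contains a unique irreducible string, and $\Sigma^{*}/R$ is in bijection with the set of all irreducible strings via the map sending a class to its normal form. The plan is then: a finite computation for (1), a short induction on length for (2), and an exhaustive enumeration for (3). As a preliminary I would record that, as one reads off the Appendix, no rule of $R$ increases the length of a string (the oriented anti-commutative rules such as $xyxy \to z$ strictly decrease it, and all remaining rules are length-preserving or length-decreasing); in particular $|NF(w)| \le |w|$ for every $w$, which trivialises the small cases below.

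For (1), run the rewriting system to completion on each of the $6^{7}$ strings $w \in \Sigma^{7}$ --- termination guarantees this halts, confluence guarantees the output is exactly $NF(w)$ --- and check in every case that $|NF(w)| \le 6$. I would in fact slightly strengthen (1) and verify $|NF(w)| \le 6$ for all $w$ with $|w| \le 7$; this is still a finite check and it makes the induction base cases in (2) automatic. (Even without this strengthening, the cases $|w| \le 6$ are immediate from the preliminary observation.)

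For (2), argue by strong induction on $n = |w|$. If $n \le 7$ the claim is (the strengthened) (1). If $n \ge 8$, factor $w = w_1 w_2$ with $|w_1| = 7$; by (1) we have $w_1 \to_R^{*} NF(w_1)$ with $|NF(w_1)| \le 6$, hence $w \to_R^{*} NF(w_1) w_2$, a string of length at most $6 + (n-7) = n-1$. By confluence $NF(w) = NF(NF(w_1) w_2)$, and the induction hypothesis applied to $NF(w_1)w_2$ gives $|NF(w)| \le 6$.

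For (3), compute $NF(w)$ for every $w \in \Sigma^{*}$ with $|w| \le 6$ (including $NF(\lambda) = \lambda$), and let $N$ be the set of distinct values obtained. By (2) every $\leftrightarrow_R^{*}$-class meets $\{w : |w| \le 6\}$, so its unique irreducible representative lies in $N$; conversely every member of $N$ is irreducible and represents its own class. Hence $|\Sigma^{*}/R| = |N|$, and the computation returns $|N| = 168$. I expect the only genuine obstacle to be the scale of this enumeration and the confidence one can place in its correctness, rather than any conceptual gap; as a cross-check I would note that each CNOT gate $[c,t]_3$ acts on $\mathbb{F}_2^{3}$ as the elementary matrix adding coordinate $c$ to coordinate $t$, and these six matrices generate $GL(3,\mathbb{F}_2)$, of order $(2^3-1)(2^3-2)(2^3-4) = 168$, so if the Iwama et al. equations are a complete axiomatisation of $=_{cir}$ on $CQC_3$ then $\Sigma^{*}/R \cong CQC_3/=_{cir} \cong GL(3,\mathbb{F}_2)$, which independently forces both the finiteness of the quotient and the count $168$.
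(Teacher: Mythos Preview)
Your proposal is correct and follows essentially the same three-step approach as the paper --- a finite computation for (1), an induction on length for (2), and an exhaustive enumeration of short strings for (3) --- with your version being more carefully argued (you explicitly note that no rule of $R$ increases length, handle the base cases $|w|\le 6$ cleanly, and spell out why confluence lets you pass to $NF(w_1)w_2$). Your $GL(3,\mathbb{F}_2)$ sanity check for the count $168$ is a genuinely useful addition not present in the paper.
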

We list the all members of $\Sigma^*/R_{E_{all}}$ in Appendix.
The question now arises: Is the set of equations wordy?
Let $E_6$ be a set of equations such that
\begin{equation}\label{e1}
E_6 = E_{all} - 
\begin{Bmatrix}
ab=ba\\
bd=db\\
cd=dc\\
ce=ec\\
af=fa\\
ef = fe
\end{Bmatrix}
=
\begin{Bmatrix}
aa=\lambda,& fbfb=a\\
bb=\lambda,& adad=b\\
cc= \lambda,& dede=c\\
dd=\lambda,& bcbc=d\\
ee=\lambda,& fcfc=e\\
ff=\lambda,   & eaea=f
\end{Bmatrix}.
\end{equation}
The size of this equation set is $|E_6| = 12$. 
\begin{lemma}
We prove the following equations.
\begin{enumerate}
\item $(ba, ab) \in R_{E_6}$,
\item $(db, bd) \in R_{E_6}$,
\item $(dc, cd) \in R_{E_6}$,
\item $(ec, ce) \in R_{E_6}$,
\item $(fa, af) \in R_{E_6}$, \mbox{and}
\item $(fe, ef) \in R_{E_6}.$
\end{enumerate}
\begin{proof}\item
\begin{enumerate}
\item Since $bab = (adad)a(adad) = a$,
we have $(ba, ab) \in R_{E_6}$.
\item 3. 4. 5. 6. We can prove similarly.
\end{enumerate}\end{proof}
\end{lemma}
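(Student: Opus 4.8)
The plan is to derive all six commutations from a single elementary identity satisfied by the anti-commutative equations of $E_6$, and then to dispose of the one commutator the identity does not reach by a short separate argument.

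The identity is this: if $X,Y,Z\in\Sigma$ and $XYXY=Z$ holds in $\Sigma^*/R_{E_6}$, then $ZXZ=X$. One just substitutes $Z=XYXY$ into both outer occurrences and cancels the successive adjacent pairs $XX$, $YY$, $XX$, $YY$ — all legitimate, since $XX=YY=ZZ=\lambda$ are relations of $E_6$ — so that the nine-letter word $XYXY\,X\,XYXY$ collapses to $X$; multiplying on the right by $Z$ and using $ZZ=\lambda$ then gives $ZX=XZ$. This is exactly the manipulation already carried out for item 1 with $X=a$, $Y=d$, $Z=b$ (the equation $adad=b$). So I would state the identity once and then read off four more cases by picking, for each target commutator, the anti-commutative equation of $E_6$ whose right-hand side is one of the two commutator letters and whose left-hand side begins with the other: $bcbc=d$ gives $db=bd$ (item 2), $dede=c$ gives $dc=cd$ (item 3), $fbfb=a$ gives $fa=af$ (item 5), and $fcfc=e$ (or equally $eaea=f$) gives $fe=ef$ (item 6). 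Each of these is the $bab=a$ computation with the letters relabelled, so I would write one out in full and call the rest analogous.

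The commutator $\{c,e\}$ of item 4 is the one obstruction, and I expect it to be the hard part: no anti-commutative equation of $E_6$ has $c$ and $e$ as its outer pair — in $dede=c$ and in $fcfc=e$ the letters $c$ and $e$ never both sit on the outside — so the identity above does not apply. I would handle this case after item 3 is available. From $dede=c$ one gets $cd=(dede)d=deded$ and $dc=d(dede)=ede$ (cancelling the leading $dd$), so the already-proven $cd=dc$ forces $deded=ede$, and left-multiplying by $d$ yields $eded=dede=c$. Thus $c$ carries the two factorisations $c=dede=eded$, and therefore
\[
ce=(dede)e=ded=e(eded)=ec,
\]
which is item 4.

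The only genuine content is the cancellation count inside $ZXZ=X$ and the second factorisation $c=eded$; everything else is bookkeeping. The main thing to be careful about in the write-up is verifying, case by case, that the anti-commutative equation chosen for items 2, 3, 5, 6 really does present the relevant commutator letter on its right-hand side and the partner letter as the first letter of its left-hand side, so that the generic identity applies verbatim; and the derivation of $eded=c$ from $dede=c$ is the seed of the subsequent observation that $E_6$ and $E_{all}$ present the same monoid.
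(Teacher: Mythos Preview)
Your proof is correct and, for items 1, 2, 3, 5, 6, coincides with the paper's: both substitute $Z=XYXY$ into $ZXZ$ and cancel adjacent squares to obtain $ZXZ=X$, hence $ZX=XZ$. For item 4 you take a detour, first deriving $eded=c$ from $dede=c$ together with the already-established $cd=dc$, and then reading off $ce=ded=ec$. This works, but it is not needed: the very same cancellation argument, applied to the \emph{second} letter of the anti-commutative word, gives $ZYZ=Y$ directly. Concretely, from $dede=c$ one has
\[
cec=(dede)\,e\,(dede)=ded\cdot dede=de\cdot ede=d\cdot de=e,
\]
so $ce=ec$ with no appeal to item 3. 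The paper's ``similarly'' is most naturally read as covering both the $ZXZ=X$ and the $ZYZ=Y$ variants of the identity; your route for item 4 is a valid alternative, just slightly longer.
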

We compute a complete transformation rule set $KBA(E_6)$ by using the Knuth-Bendix algorithm,
and we can have $KBA(E_6) = KBA(E_{all})$.
The above results means that commutative type equations is not required for the initial equation set.
We have next proposition.
\begin{proposition}
Let $(\Sigma, R)$ be a quantum circuit rewriting system, $E_{all}$ and $E_6$ sets of equations defined by (\ref{e0}) and (\ref{e1}),
\begin{equation*}
\Sigma^* /R_{E_6} = \Sigma^* /R_{E_{all}}.
\end{equation*}
\hfill $\qed$
\end{proposition}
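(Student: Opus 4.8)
The plan is to reduce the statement to the purely equational fact that the congruences $\leftrightarrow^*_{E_6}$ and $\leftrightarrow^*_{E_{all}}$ on $\Sigma^*$ coincide. Indeed, by the definition of a factor monoid we have $\Sigma^*/R_{E_6} = \Sigma^*/\!\leftrightarrow^*_{R_{E_6}}$ and $\Sigma^*/R_{E_{all}} = \Sigma^*/\!\leftrightarrow^*_{R_{E_{all}}}$, so it suffices to show $\leftrightarrow^*_{R_{E_6}} = \leftrightarrow^*_{R_{E_{all}}}$. Here I would invoke the basic correctness property of Knuth--Bendix completion \cite{KnBe70, Book}: when $KBA(E)$ succeeds, the resulting complete system generates exactly the congruence generated by $E$, i.e. $\leftrightarrow^*_{KBA(E)} = \leftrightarrow^*_{E}$. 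Applying this with $E = E_6$ and with $E = E_{all}$, the proposition becomes the claim $\leftrightarrow^*_{E_6} = \leftrightarrow^*_{E_{all}}$.

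To prove this equality of congruences I would establish the two inclusions. One direction is immediate: since $E_6$ is obtained from $E_{all}$ merely by deleting the six commutative equations $ab=ba$, $bd=db$, $cd=dc$, $ce=ec$, $af=fa$, $ef=fe$, we have $E_6 \subseteq E_{all}$, so every $E_6$-rewriting step is an $E_{all}$-rewriting step and hence $\leftrightarrow^*_{E_6} \subseteq \leftrightarrow^*_{E_{all}}$. For the reverse inclusion it is enough to derive each of the six deleted equations from $E_6$, and this is exactly the content of the lemma immediately preceding this proposition, where it is shown that $(ba,ab),(db,bd),(dc,cd),(ec,ce),(fa,af),(fe,ef)$ all lie in $R_{E_6}$, i.e. in $\leftrightarrow^*_{E_6}$. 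Since the six commutators are the only equations in $E_{all}$ not already in $E_6$, this gives $\leftrightarrow^*_{E_{all}} \subseteq \leftrightarrow^*_{E_6}$, and combining the two inclusions yields $\Sigma^*/R_{E_6} = \Sigma^*/R_{E_{all}}$.

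The only place any genuine work occurs is the reverse inclusion, and it is already discharged by the preceding lemma: the engine is that the anti-commutative equations $xyxy=z$ survive in $E_6$, so that, for example, $adad=b$ gives $bab = (adad)\,a\,(adad) = a$, whence $ba = ba\,(bb) = (bab)\,b = ab$ using $bb=\lambda$, and the remaining five commutators follow by the analogous relabellings of $\Sigma$ used in the proof of Lemma~\ref{comlemma}. Thus the ``hard part'' is bookkeeping rather than anything conceptual. As an independent check I would also note the counting argument: the inclusion $E_6 \subseteq E_{all}$ induces a surjective monoid homomorphism $\Sigma^*/R_{E_6} \to \Sigma^*/R_{E_{all}}$, $[w]_{E_6} \mapsto [w]_{E_{all}}$, and since $|\Sigma^*/R_{E_{all}}| = 168$ is finite by the previous Proposition, recomputing the normal forms of all $w \in \Sigma^k$ with $1 \le k \le 6$ under $R_{E_6}$ to confirm $|\Sigma^*/R_{E_6}| = 168$ forces this surjection to be a bijection, again giving the equality of congruences; this is also consistent with the computational observation $KBA(E_6) = KBA(E_{all})$ recorded above.
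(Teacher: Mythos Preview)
Your argument is correct and follows the same route as the paper: the proposition is stated without a formal proof and is treated as an immediate consequence of the preceding lemma (derivability of the six commutative equations from $E_6$) together with the computational observation $KBA(E_6)=KBA(E_{all})$. You have simply made the underlying logic explicit, and your additional counting-surjection check is a nice redundancy but not present in the paper.
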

In the following section, we reduce the size of equation set and show the existence of the minimal set of equations $E_{min}$ of $E_6$ that generates the isomorphic monoid $\Sigma^*/ R_{E_{min}} = \Sigma^*/R_{E_6}$. 
%
%
%
%
\section{Minimal set of equations} 
\begin{definition}[Minimal set of equations]
Let $E\subseteq \Sigma ^* \times \Sigma ^*$. A subset $E_{min}\subset E$ is a minimal equation set of $E$ if and only if  
\begin{itemize}
\item $\Sigma^*/ R_{E_{min}} = \Sigma^*/R_{E}$, and
\item If $\Sigma^*/ R_{E'} = \Sigma^*/R_{E}$ then $|E_{min}| \le |E'|$ for all $E' \subset E$.
\end{itemize}
\end{definition}
In this section, we investigate a minimal set of equation of $E_6$ such that $\Sigma^*/ R_{E_{min}}= \Sigma^* /R_{E_6}$.
We delete some equations from $E_6$ and prove that the factor monoids of the equations are isomorphic.
We follow the same line of thought as was used for the elementary Tietze transformation \cite{Book}.
We first prove the following proposition.
\begin{proposition}\label{mainprop1}
Let $(\Sigma, R)$ be a quantum circuit rewriting system, $E_6$ a set of equations defined by (\ref{e1}),
\begin{equation*}
E_5 =
\begin{Bmatrix}
aa=\lambda,& fbfb=a\\
bb=\lambda,& adad=b\\
cc= \lambda,& dede=c\\
dd=\lambda,& bcbc=d\\
ee=\lambda,& fcfc=e\\
          & eaea=f
\end{Bmatrix}, \mbox{and}
\end{equation*}
\begin{eqnarray}
E_2 =
\begin{Bmatrix}
aa=\lambda,& fbfb=a\\
bb=\lambda,& adad=b\\
& dede=c\\
& bcbc=d\\
& fcfc=e\\
          & eaea=f
\end{Bmatrix}.
\end{eqnarray}
Then we have followings:
\begin{enumerate}
\item $(efc, cf), ((fc)e, e(fc)) \in R_{E_5},$
\item $(ff , \lambda) \in R_{E_5},$
\item $\Sigma^*/ R_{E_5} =\Sigma ^*/ R_{E_6}, \mbox{\  and}$
\item $\Sigma^*/ R_{E_2} =\Sigma ^*/ R_{E_6}.$
\end{enumerate}
\end{proposition}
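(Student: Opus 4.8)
The plan is to prove the four items in order, using each one as a stepping stone for the next, in the spirit of Tietze transformations. For item (1), I would work entirely inside $\Sigma^*/R_{E_5}$. Note that $E_5$ is obtained from $E_6$ by deleting only the commutative relation $ef = fe$; all other $E_6$-relations, including $ff = \lambda$ — wait, $ff=\lambda$ is \emph{not} in $E_5$, which is exactly why item (2) is needed. So in $E_5$ I may freely use $aa=bb=cc=dd=ee=\lambda$, the six anti-commutative relations of $E_5$, and the derived consequences already established (Lemma~\ref{comlemma}, Corollary~\ref{comcor}, Table~\ref{com}), being careful that those derivations did not secretly rely on $ff=\lambda$ or on $ef=fe$. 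To get $efc = cf$ and $(fc)e = e(fc)$, I would start from the anti-commutative relation $fcfc = e$, which gives $efc = (fcfc)fc = fc(fc)^{? }$ — more precisely, multiply $fcfc = e$ on the right by $c$ and then by $f$, or compute $e\cdot fc = fcfc\cdot fc = fc(fcfc)c\cdots$; the clean route is $e(fc) = (fcfc)(fc)$ and then collapse using $cc=\lambda$, $fcfc=e$ repeatedly, landing on $(fc)e$, which gives the commutation $(fc)e = e(fc)$; combining this with $fcfc=e$ once more yields $efc = cf$. I expect this to be a short manipulation once the right associativity bracketing is chosen.

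For item (2), the point is that $ff=\lambda$, although dropped from $E_5$, is a \emph{consequence} of the remaining relations. Here I would use the relation $fcfc = e$ together with $cc = \lambda$ and $ee=\lambda$: from $fcfc=e$ we get $fcfce = \lambda$, i.e. $fc(fce)=\lambda$, and then $fce = cf$-type consequences from item (1) let us rewrite $fce$ and cancel, extracting $ff=\lambda$. Alternatively, $eaea=f$ and $aa=\lambda$, $ee=\lambda$ give $ff = (eaea)(eaea)$ which telescopes to $\lambda$. The second form is probably cleanest: $ff = eae\cdot a\cdot eae\cdot a$, and using $aa=\lambda$ in the middle and $ee=\lambda$ at the ends it collapses. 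So item (2) follows directly from an anti-commutative relation whose right-hand side is $f$, squared.

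For item (3), I would argue that $\Sigma^*/R_{E_5}$ and $\Sigma^*/R_{E_6}$ are the same monoid by showing the two rewriting systems generate the same congruence $\leftrightarrow^*$. One inclusion is trivial since $E_5 \subset E_6$ (so $\leftrightarrow^*_{E_5} \subseteq \leftrightarrow^*_{E_6}$); for the reverse, I must show the single missing $E_6$-relation $ef=fe$ is an $E_5$-consequence. By item (2) I now have $ff=\lambda$ available in $E_5$, so the full relation set $E_6 \setminus\{ef=fe\}$ is derivable, and then $ef=fe$ should follow from the anti-commutative relations just as the analogous commutation $(ba,ab)\in R_{E_6}$ followed in the lemma preceding this proposition — e.g. via a $xyxy$-table entry, or directly from $eaea=f$, $fbfb=a$ and the already-proved commutations. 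Concretely I would look up or rederive the relevant entry of Table~\ref{com} and chase it. For item (4), $E_2$ drops from $E_5$ the three relations $cc=\lambda$, $dd=\lambda$, $ee=\lambda$; I would recover each as a consequence of an anti-commutative relation with that letter on the right: $cc = (dede)(dede) \to \lambda$ using $dd=\lambda$, $ee=\lambda$ — but wait, that uses $ee=\lambda$, which is being deleted, so I must instead square a relation whose right side is $c$ without circularity, and similarly order the three deletions $c,d,e$ so that each is derived before it is needed; since $dede=c$, $bcbc=d$, $fcfc=e$ form a little cycle, I would derive, say, $dd=\lambda$ first from $bcbc=d$ squared (needs $bb=\lambda$, $cc=\lambda$ — but $cc$ is also deleted!). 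This circularity is the crux.

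The main obstacle, then, is item (4): the three relations $cc=\lambda$, $dd=\lambda$, $ee=\lambda$ are mutually entangled through the anti-commutative cycle $c \leftrightarrow d \leftrightarrow e$, so a naive "square the defining relation" argument is circular, and I must find an acyclic derivation — most likely by first establishing that \emph{each} generator is an involution as a global structural fact (e.g. every element of the monoid, hence every generator, has order dividing $2$ once we know enough of the group structure, using that $\Sigma^*/R$ embeds in a Clifford-type group), and only then deduce the three individual relations. If a slick structural argument is unavailable, the fallback is an explicit finite check: compute normal forms of $cc$, $dd$, $ee$ in $\Sigma^*/R_{E_2}$ directly (feasible since all normal forms have length $\le 6$) and verify they equal $\lambda$. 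I would present item (4) via whichever of these is shorter, flagging that the ordering of deletions matters.
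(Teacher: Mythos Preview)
Your overall Tietze-style plan matches the paper's, but there is a real gap in items (1)--(2). Your preferred route for (2), ``$ff=(eaea)(eaea)$ telescopes to $\lambda$ using $aa=\lambda$ and $ee=\lambda$'', fails: the word $eaeaeaea$ has no adjacent equal letters, so neither relation applies (this is the infinite-dihedral obstruction: $x^2=y^2=1$ does not force $(xy)^4=1$). Your first route for (2), via item (1), is the correct one and is what the paper does --- but your sketch of (1) is also insufficient. From $fcfc=e$ together with $cc=\lambda$, $ee=\lambda$ you do get the commutation $(fc)e=e(fc)$, but you \emph{cannot} get $efc=cf$ this way: working in $\langle c,f\mid c^2=1,\ (fc)^4=1\rangle$ one has $efc=(fc)^3=cf^{-1}$, so $efc=cf$ is equivalent to $f^2=1$, which is exactly what you are trying to prove. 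The paper breaks this circle by bringing in the \emph{second} anti-commutative relation touching $f$, namely $eaea=f$: it inserts the unit $(ee)(aeae\cdot eaea)(cc)=\lambda$ (this telescope \emph{does} collapse, starting from the interior $ee$), regroups to expose two copies of $eaea=f$, and then uses $fcfc=e$ to obtain $cf=efc$. Item (2) then follows exactly as in your first outline: $ff=f(cc)f=fc\cdot(efc)=(fce)fc=(efc)fc=e(fcfc)=ee=\lambda$.

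For item (3) you are still confused about what is missing: $E_6$ already contains no commutative relations, and $E_5=E_6\setminus\{ff=\lambda\}$, so once (2) gives $ff=\lambda$ in $R_{E_5}$, item (3) is immediate --- there is nothing about $ef=fe$ to re-derive. For item (4), your instinct to order the deletions is precisely the paper's method: it sets $E_4,E_3,E_2$ by removing $ee,dd,cc$ one at a time and invokes ``similarly'' at each step. Your circularity worry about naive squaring is exactly the same obstruction as above, and the intended remedy is the same: at each stage use \emph{two} anti-commutative relations that link the target letter back to letters whose squares are still available, mimicking items (1)--(2), rather than a one-relation squaring or a global structural argument.
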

\begin{proof}
We prove this proposition in following procedures.
\begin{enumerate}
\item Since
\begin{eqnarray*}
cf &=& (ee)(aeaeeaea)cf(cc)\\
     &=& e(eaea)e(eaea)cfcc\\
     &=& efe(fcfc)c\\
     &=& efeec\\
     &=& efc,
\end{eqnarray*}
we have $(efc, cf) \in R_{E_5}$.\\
Since
\begin{eqnarray*}
fce &=& fc(fcfc)\\
      &=& efc,
\end{eqnarray*}
we have $((fc)e, e(fc)) \in R_{E_5}$.
\item Since
\begin{eqnarray*}
ff &=& f(cc)f\\
     &=& fc(efc)\\
     &=& (efc)fc\\
     &=& e(e)\\
     &=& \lambda,
\end{eqnarray*}
we have $(ff, \lambda) \in R_{E_5}$.
\item We show that $\Sigma^*/R_{E_5} = \Sigma^*/R_{E_6}$.
Since $[ff]_{E_5} = [\lambda]_{E_5}$, we have $\Sigma^* / R_{E_5} = \Sigma^* / R_{E_6}$.
\item Let $E_4$, $E_3$ and $E_2$ be sets of equations where
\begin{eqnarray*}
E_4 &=& E_6 - \{ee = \lambda,ff = \lambda\}, \\
E_3 &=& E_6 - \{dd = \lambda,ee = \lambda,ff = \lambda\}, and \\
E_2 &=& E_6 - \{cc = \lambda, dd = \lambda,ee = \lambda,ff = \lambda\} .
\end{eqnarray*}
Similarly, we have $(ee,\lambda)\in R_{E_4}$,
$(dd,\lambda)\in R_{E_3}$ and $(cc,\lambda)\in R_{E_2}$.
So we obtain
\begin{eqnarray*}
\Sigma^*/R_{E_6} &=& \Sigma^* /R_{E_5} = \Sigma^* /R_{E_3} = \Sigma^* /R_{E_4}\\
&=& \Sigma^* /R_{E_2}.
\end{eqnarray*}
\end{enumerate}
\end{proof}
%
%
%
%
%
%
Next, we prove that $E_2$ is a minimal set of equations.
\begin{proposition}\label{mainprop2}
\item Let $E' \subset E_6$.
If $\Sigma^*/ R_{E'} = \Sigma^*/R_{E_6}$, then $|E'| \geq 8$.
\end{proposition}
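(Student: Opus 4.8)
The plan is to take an arbitrary $E'\subset E_6$ with $\Sigma^*/R_{E'}=\Sigma^*/R_{E_6}$, put $G:=\Sigma^*/R_{E_6}$ (a group of order $168$ in which each of $a,\dots,f$ is an involution and every equation of $E_6$ holds), and prove two things: that $E'$ must contain \emph{all six} anti-commutative equations $s_a: fbfb=a$, $s_b: adad=b$, $s_c: dede=c$, $s_d: bcbc=d$, $s_e: fcfc=e$, $s_f: eaea=f$, and that $E'$ must contain \emph{at least two} of the six eliminated equations $r_a: aa=\lambda,\dots,r_f: ff=\lambda$. These eight equations are pairwise distinct, so $|E'|\ge 8$. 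The one tool is Lemma~\ref{interpretation}: to show that no $E'$ with $E'\subseteq E''$ can satisfy $\Sigma^*/R_{E'}=G$, it suffices to exhibit a single monoid $M$ and a map $f:\Sigma\to M$ whose extension $f^*$ satisfies every equation of $E''$ but fails some equation of $E_6$; then $f^*$ also satisfies every equation of $E'$, so $M$ is a model of $R_{E'}$ but not of $R_{E_6}$, and Lemma~\ref{interpretation} (with $R_1=R_{E_6}$, $R_2=R_{E'}$) gives $\Sigma^*/R_{E_6}\nsim\Sigma^*/R_{E'}$, contradicting the assumed equality.

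First I would handle the anti-commutative equations. Suppose some $s_x\notin E'$; then $E'\subseteq E_6\setminus\{s_x\}$. The key observation is that in $E_6\setminus\{s_x\}$ the letter $x$ occurs an \emph{even} number of times in every equation: inside $E_6$ the letter $x$ is the right-hand side of exactly one equation, namely $s_x$, which has been removed, and in every other equation of $E_6$ the letter $x$ occurs only as the block $xx$ (in $r_x$) or as one of the two doubled letters of some $s_y$. Hence the map $f:\Sigma\to G\times\mathbb{Z}/3\mathbb{Z}$, sorry, $f:\Sigma\to G\times\mathbb{Z}/2\mathbb{Z}$ with $f(x)=(\bar x,1)$ and $f(y)=(\bar y,0)$ for $y\ne x$ (where $\bar y$ is the image of $y$ in $G$) satisfies every equation of $E_6\setminus\{s_x\}$: the $G$-coordinates hold because they already hold in $G$, and the $\mathbb{Z}/2\mathbb{Z}$-coordinates cancel. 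But $s_x$ fails, since its left-hand side involves no $x$ and hence has $\mathbb{Z}/2\mathbb{Z}$-coordinate $0$, whereas its right-hand side $x$ has coordinate $1$. Thus no $E'$ omitting some $s_x$ works, so $E'$ contains all six $s_x$.

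Next I would handle the eliminated equations. Assuming (by the previous step) that $E'$ contains all six $s_x$ and at most one $r_x$, we get $E'\subseteq\{s_a,\dots,s_f,r_x\}$ for a suitable $x$, and I would work in $G\times\mathbb{Z}/3\mathbb{Z}$. Writing $\zeta_y\in\mathbb{Z}/3\mathbb{Z}$ for the second coordinate assigned to $y$, a direct check shows that $s_y$ (with doubled letters $p,q$ and right-hand side $y$) holds in $G\times\mathbb{Z}/3\mathbb{Z}$ exactly when $\zeta_p+\zeta_q+\zeta_y=0$, and that $r_x$ holds exactly when $\zeta_x=0$. The six triples $\{p,q,y\}$ coming from $s_a,\dots,s_f$ are precisely the ``vertex-plus-two-neighbours'' triples of the $6$-cycle whose vertices in cyclic order are $a,b,d,c,e,f$, so the constraints say ``label at a vertex plus the labels at its two neighbours is $0$''; over $\mathbb{F}_3$ the associated three-term recurrence $\zeta_{i-1}+\zeta_i+\zeta_{i+1}=0$ has characteristic polynomial $t^{2}+t+1=(t-1)^{2}$, so its space of cyclic solutions is two-dimensional, consisting of the arithmetic progressions $\zeta_{v_i}=c_1+c_2 i$ around the cycle. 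Imposing the single further condition $\zeta_x=0$ still leaves a nonzero, hence non-constant, solution, so $\zeta_y\ne 0$ for some $y$ (e.g. for $x=a$ the labels $\zeta_a=0,\zeta_b=1,\zeta_c=0,\zeta_d=2,\zeta_e=1,\zeta_f=2$ satisfy $s_a,\dots,s_f,r_a$ while $r_b$ fails). This assignment is a model of $\{s_a,\dots,s_f,r_x\}$ in which some $r_y\in E_6$ fails, so $E'$ must contain at least two of the $r_x$, whence $|E'|\ge 6+2=8$.

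The part I expect to require the most thought is the choice of the auxiliary monoids. The $\mathbb{Z}/2\mathbb{Z}$-extension used for the anti-commutative equations is essentially forced, since every generator must stay an involution. For the eliminated equations, however, the extension can be neither torsion-free (over $\mathbb{Z}$ the cyclic three-term system has only the zero solution) nor of exponent $2$ (there every $r_y$ holds automatically, so nothing can be made to fail); $\mathbb{Z}/3\mathbb{Z}$ is the simplest group for which $\zeta_p+\zeta_q+\zeta_y=0$ admits a nontrivial solution with a prescribed coordinate equal to $0$ while still permitting some $r_y$ to fail, and the fact that this happens is exactly the statement that $t^2+t+1$ acquires the repeated root $1$ over $\mathbb{F}_3$. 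The remaining steps — the two verifications that the displayed assignments satisfy the listed equations, and the bookkeeping identifying the triples in $s_a,\dots,s_f$ with the corners of the $6$-cycle on $a,b,d,c,e,f$ — are routine.
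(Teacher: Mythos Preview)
Your proof is correct and follows essentially the same two-step strategy as the paper: a $\mathbb{Z}/2\mathbb{Z}$ parity model to rule out dropping any anti-commutative equation, and a $\mathbb{Z}/3\mathbb{Z}$ residue model to rule out keeping at most one eliminated equation. Indeed your explicit assignment for $x=a$, namely $\zeta_a=\zeta_c=0$, $\zeta_b=\zeta_e=1$, $\zeta_d=\zeta_f=2$, is exactly the interpretation $j$ the paper writes down.

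Two small points of comparison. First, the factor $G=\Sigma^*/R_{E_6}$ in your products $G\times\mathbb{Z}/2\mathbb{Z}$ and $G\times\mathbb{Z}/3\mathbb{Z}$ is superfluous: the $\mathbb{Z}/n\mathbb{Z}$ coordinate alone already satisfies every equation of $E_6\setminus\{s_x\}$ (resp.\ $\{s_a,\dots,s_f,r_x\}$) and fails the one you want, which is precisely how the paper proceeds. Second, your recurrence analysis on the $6$-cycle $a,b,d,c,e,f$ is a genuine improvement in presentation: it treats all six choices of $x$ in step~2 uniformly and explains \emph{why} $\mathbb{Z}/3\mathbb{Z}$ is the right target (the repeated root of $t^2+t+1$ over $\mathbb{F}_3$), whereas the paper simply exhibits the assignment for $x=a$ and asserts that the other cases are similar.
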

\begin{proof}
We will prove this in two stpdf.
\begin{description}
\item[ step 1:] First, we prove that we cannot remove a anti-commutative equation from $E_6$ .\\
We define a set of equations 
\begin{equation*}
E_{anti} =
\begin{Bmatrix}
fbfb=a, adad=b, dede=c,\\
bcbc=d,fcfc=e, eaea=f 
\end{Bmatrix}.
\end{equation*}
Let $u \in E_{anti}$ and consider $u$ as $xyxy = z$  ($x,y,z \in \Sigma$).
We define $E_u$ as $E_u = E_6 - \{u\} = E_6 - \{xyxy=z\}$.
Then we can show $\Sigma^*/R_{E_u} \neq \Sigma^*/R_{E_6}$ as follows.
We consider a monoid $M = (\{0,1\}, \cdot , 0)$ where a binary operator $\cdot \subset \{0,1\} \times \{0,1 \} \to \{0,1\}$ is defined by Table \ref{dot1}, and a function $i : \Sigma^* \to M$ is defined as
$$i(\lambda) = i(k) = 0 ( \forall k \neq z) , \mbox{\ and \ }  i(z) = 1.$$
We consider a homomorphism $i^*$ and show that $i^*$ is an interpretation for $E_u$: \\
$$i^*(kk) = 0 \cdot  0 = 0 = i^*(\lambda),$$
$$i^*(zz) = 1 \cdot 1 = 0 = i^*(\lambda), \mbox{\ and}$$
\begin{eqnarray*}
i^*(mnmn) &=& i^*(mn)\cdot i^*(mn) \\
&=& 0\\
&=&  i^*(k), \forall m, n \in \Sigma, k \neq z.
\end{eqnarray*}
Since $x \not=z$ and $y \not=z$, then the value of $i^*(xyxy)$ is
$$i^*(xyxy) = i(0)\cdot i(0)\cdot i(0)\cdot i(0) = 0 \cdot 0 \cdot 0 \cdot 0 =  0.$$
Since $i^*(z)$ is
$$i^*(z) = i(z) = 1.$$
We have $i^*(xyxy) \neq i^*(z).$
By Lemma \ref{interpretation}, $[xyxy]_{E_u}  \neq [z]_{E_u}$.
On the other hand, it is obvious that $[xyxy]_{E_6}  = [z]_{E_6}$.
Therefore, 
$$\Sigma^*/R_{E_u} \neq \Sigma^*/R_{E_6}.$$
\begin{table}[!h]  \centering
\begin{tabular}{r|rr}
$\cdot$ & 0 & 1\\
\hline
    0 & 0 & 1 \\
    1 & 1 & 0
\end{tabular}
\caption{Definition of the binary operator $\cdot $}\label{dot1}
\label{cdot_def}
\end{table}
\item[step 2:]
Let $x \in \Sigma$ and $E_{x}$ a set of equations defined by
\begin{eqnarray*}
E_{x} &=&
\begin{Bmatrix}
xx=\lambda,& fbfb=a\\
           & adad=b\\
           & dede=c\\
           & bcbc=d\\
           & fcfc=e\\
           & eaea=f
\end{Bmatrix}.
\end{eqnarray*}
Then we can have $\Sigma^*/R_{E_x}\not=\Sigma^*/R_{E_6}$ as follows.
For example, if we consider $x = a$, we can prove $\Sigma^*/ R_{E_{a}} \neq \Sigma ^*/ R_{E_6}.$
Let $N = (\{0,1, 2\}, \cdot , 0)$ be a monoid where a binary operator $\cdot \subset \{0,1\} \times \{0,1 \} \to \{0,1\}$ is  defined by Table  \ref{dot2}, and let the function $j : \Sigma^* \to N$ be $j(\lambda) = j(a) = j(c) = 0, j(b) = j(e) = 1$ and $j(d) = j(f) = 2.$
We consider a homomorphism $j^*$ and show that $j^*$ is an interpretation for $E_{a}$:
$$j^*(aa) = j(a)\cdot j(a) = 0 = j^*(\lambda),$$
$$j^*(fbfb) = j^*(fb)\cdot j^*(fb) = 0\cdot 0 = 0 = j^*(a),$$
$$j^*(adad) = j^*(ad)\cdot j^*(ad) = 2\cdot 2 = 1 =  j^*(b),$$
$$j^*(dede) = j^*(de)\cdot j^*(de) = 0\cdot 0 = 0 =  j^*(c),$$
$$j^*(bcbc) = j^*(bc)\cdot j^*(bc) = 1\cdot 1 = 2 =  j^*(d),$$
$$j^*(fcfc) = j^*(fc)\cdot j^*(fc) = 2\cdot 2 = 1 =  j^*(e)  \mbox{and}$$
$$j^*(eaea) = j^*(ea)\cdot j^*(ea) = 1 \cdot 1 = 2 =  j^*(d).$$
The value of $j^*(bb)$ is
$$j^*(bb) = j(b)\cdot j(b) = 1 \cdot 1 =  2.$$
The value of $j^*(\lambda)$ is
$$j^*(\lambda) = j(\lambda) = 0.$$
Since $j^*(bb) \neq j^*(\lambda)$, we have $[bb]_{E_{a}}  \neq [\lambda]_{E_{a}}$ by Lemma \ref{interpretation} .
On the other hand, it is obvious that $[bb]_{E_6}  = [\lambda]_{E_6}$.
Therefore, $$\Sigma^*/R_{E_{a}} \neq \Sigma^*/R_{E_6}.$$
\begin{table}[!h]  \centering
\hspace{.cm}
\begin{tabular}{r|rrr}
$\cdot$& 0 & 1 & 2\\
\hline
    0 & 0 & 1 & 2 \\
    1 & 1 & 2 & 0 \\
    2 & 2 & 0 & 1
\end{tabular}
\caption{Definition of  the binary operator $\cdot$}\label{dot2}
\end{table}
\end{description}
Let $E' \subset E_6$ be a set of equations.
If $\Sigma^*/R_{E'} = \Sigma^*/R_{E_6}$, it contained at least six anti-commutative equations by step1 and at least two eliminated equations by step2. 
Therefore, if $\forall E' \subset E_6$ it holds that $\Sigma^* /R_{E'} = \Sigma^*/ R_{E_6}$, then
$$|E'| \geq 8.$$
\end{proof}
\begin{lemma}
Let $E_{ac}$ be a set of equations where
\begin{eqnarray*}
E_{ac} &=&
\begin{Bmatrix}
aa=\lambda,& fbfb=a\\
           & adad=b\\
cc=\lambda,& dede=c\\
           & bcbc=d\\
           & fcfc=e\\
           & eaea=f
\end{Bmatrix}.
\end{eqnarray*}
Then 
\begin{equation}
\Sigma^*/ R_{E_{ac}} \neq \Sigma ^*/ R_{E_6}.
\end{equation}
\begin{proof}
We show $[bb]_{E_{ac}} \neq [\lambda]_{E_{ac}}$.
Let $N = (\{0,1, 2\}, \cdot , 0)$ be a monoid where a binary operator $\cdot \subset \{0,1\} \times \{0,1 \} \to \{0,1\}$ is defined by Table  \ref{dot2}, and let the function $j : \Sigma^* \to N$ be $j(\lambda) = j(a) = j(c) = 0, j(b) = j(e) = 1$ and $j(d) = j(f) = 2.$ 
The function $j$ is the same function used in step 2 of Proposition \ref{mainprop2}.
We consider a homomorphism $j^*$ and $j^*$ is an interpretation for $E_{a}$.
We check that $j^*(cc) = j^*(\lambda)$.
$$j^*(cc) = j(c)\cdot j(c) = 0 = j^*(\lambda).$$
The value of $j^*(bb)$ is
$$j^*(bb) = j(b)\cdot j(b) = 1 \cdot 1 =  2.$$
The value of $j^*(\lambda)$ is
$$j^*(\lambda) = j(\lambda) = 0.$$
Since $j^*(bb) \neq j^*(\lambda)$, we have $[bb]_{E_{ac}}  \neq [\lambda]_{E_{ac}}$ by Lemma \ref{interpretation}.
On the other hand, it is obvious that $[bb]_{E_6}  = [\lambda]_{E_6}$.
Therefore, $$\Sigma^*/R_{E_{ac}} \neq \Sigma^*/R_{E_6}.$$
\end{proof}
\end{lemma}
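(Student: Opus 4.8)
The plan is to apply Lemma~\ref{interpretation} with $R_1 = R_{E_6}$ and $R_2 = R_{E_{ac}}$: it suffices to produce a monoid $M$ that is a model of $E_{ac}$, an interpretation $f^* : \Sigma^* \to M$, and a pair $(x_1,x_2)$ that is collapsed in $\Sigma^*/R_{E_6}$ with $f^*(x_1) \neq f^*(x_2)$. Since $E_{ac}$ differs from $E_6$ only by the absence of the eliminated equations $bb=\lambda$, $dd=\lambda$, $ee=\lambda$, $ff=\lambda$, the obvious candidate to separate is $(x_1,x_2)=(bb,\lambda)$, which clearly satisfies $[bb]_{E_6}=[\lambda]_{E_6}$; any of $dd$, $ee$, $ff$ would serve equally well.

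For the model I would reuse the $\mathbb{Z}/3\mathbb{Z}$-monoid $N=(\{0,1,2\},\cdot,0)$ of Table~\ref{dot2} together with the map $j : \Sigma \to N$ from step~2 of Proposition~\ref{mainprop2}, namely $j(\lambda)=j(a)=j(c)=0$, $j(b)=j(e)=1$, $j(d)=j(f)=2$. First I would verify that the induced homomorphism $j^*$ is an interpretation of $E_{ac}$. The two surviving eliminated equations give $j^*(aa)=0\cdot0=0=j^*(\lambda)$ and $j^*(cc)=0\cdot0=0=j^*(\lambda)$. Each of the six anti-commutative equations has the form $(mn)(mn)=k$, so its left-hand side is evaluated by doubling $j^*(mn)$ in $\mathbb{Z}/3\mathbb{Z}$, which sends $0\mapsto0$, $1\mapsto2$, $2\mapsto1$; these six checks ($fbfb\mapsto a$, $adad\mapsto b$, $dede\mapsto c$, $bcbc\mapsto d$, $fcfc\mapsto e$, $eaea\mapsto f$) are literally the ones already displayed in step~2 of Proposition~\ref{mainprop2}, so nothing new is needed. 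Hence $N$ is a model of $E_{ac}$ and $j^*$ an interpretation.

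Finally I would compute $j^*(bb)=j(b)\cdot j(b)=1\cdot1=2$ while $j^*(\lambda)=0$, so $j^*(bb)\neq j^*(\lambda)$. Via the induced homomorphism $[j^*] : \Sigma^*/R_{E_{ac}} \to N$ this forces $[bb]_{E_{ac}}\neq[\lambda]_{E_{ac}}$, whereas $[bb]_{E_6}=[\lambda]_{E_6}$; therefore $\Sigma^*/R_{E_{ac}}\nsim\Sigma^*/R_{E_6}$, i.e. $\Sigma^*/R_{E_{ac}}\neq\Sigma^*/R_{E_6}$. The only genuine decision in the argument is the choice of the separating model: one needs an assignment into a nontrivial group of odd order (here $\mathbb{Z}/3\mathbb{Z}$) under which all six anti-commutators and both surviving involutions $aa,cc$ map to the identity while $bb$ does not, and the map $j$ of Proposition~\ref{mainprop2} already has exactly this property, so past that point the proof is a routine table verification with no real obstacle.
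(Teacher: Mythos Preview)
Your proposal is correct and matches the paper's proof essentially line for line: both reuse the $\mathbb{Z}/3\mathbb{Z}$ monoid $N$ of Table~\ref{dot2} with the same map $j$ from step~2 of Proposition~\ref{mainprop2}, verify that the extra equation $cc=\lambda$ is also respected, and then separate $bb$ from $\lambda$ via $j^*(bb)=2\neq 0$. There is no meaningful difference in approach.
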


From the above discussion, we derive the next theorem.
\begin{theorem}\label{main}
There exists a minimal equation set
$E_{min}$ of $E_6$ such that $|E_{min}| = 8$.
\begin{proof}
$E_2$ is a minimal equation of $E_6$ by Propositions \ref{mainprop1} and \ref{mainprop2}.
\end{proof}
\end{theorem}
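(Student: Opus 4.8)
The plan is to sandwich the size of a minimal set between an \emph{upper} bound, obtained by exhibiting an explicit eight-equation subset of $E_6$ that generates the same factor monoid, and a matching \emph{lower} bound, obtained from model-theoretic obstructions. Concretely, I would take $E_{min} := E_2$ (the set of Proposition \ref{mainprop1}) and argue two things: (i) $\Sigma^*/R_{E_2} = \Sigma^*/R_{E_6}$, and (ii) no $E' \subset E_6$ with $\Sigma^*/R_{E'} = \Sigma^*/R_{E_6}$ has fewer than eight equations. Since $|E_2| = 8$, this is exactly the definition of a minimal equation set of $E_6$, and then $|E_{min}| = 8$.

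For part (i), the strategy is a Tietze-style removal of the ``redundant'' eliminated relations one at a time, producing the chain $E_6 \supset E_5 \supset E_4 \supset E_3 \supset E_2$ in which each step drops one eliminated relation ($ff=\lambda$, then $ee=\lambda$, then $dd=\lambda$, then $cc=\lambda$) and then re-derives it inside the corresponding rewriting system using only the surviving anti-commutative and eliminated equations. For instance, $bab = (adad)\,a\,(adad) = a$ recovers $ba=ab$ from the anti-commutative block, and a similar chain of substitutions (passing through an auxiliary identity such as $cf = efc$) yields $[ff] = [\lambda]$ already in $R_{E_5}$; the remaining cases $[ee] = [\lambda]$ in $R_{E_4}$, $[dd] = [\lambda]$ in $R_{E_3}$, $[cc] = [\lambda]$ in $R_{E_2}$ are of the same shape. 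Composing the four resulting equalities of congruences gives $\Sigma^*/R_{E_2} = \Sigma^*/R_{E_6}$. This part is routine string algebra once the intermediate words are guessed; the only care needed is bookkeeping of which eliminated relation is available at each stage.

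For part (ii), I would split the argument exactly as in Proposition \ref{mainprop2}. First, each of the six anti-commutative equations is indispensable: on dropping $xyxy = z$ from $E_6$, the two-element monoid $M = (\{0,1\},\cdot,0)$ with $1\cdot 1 = 0$, together with the interpretation sending $z \mapsto 1$ and every other letter (and $\lambda$) to $0$, still models the reduced system but separates $[xyxy]$ from $[z]$; by Lemma \ref{interpretation} the factor monoid changes, so $E'$ must contain all six anti-commutative equations. Second, a single eliminated relation does not suffice: keeping only $xx=\lambda$ alongside the six anti-commutative equations, the three-element monoid $N = (\{0,1,2\},\cdot,0)$ given by addition modulo $3$, under a suitable assignment of the six letters to $\{0,1,2\}$, models the system yet violates one of the remaining eliminated relations, so again the factor monoid differs. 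Hence $E'$ contains at least two eliminated equations, and $|E'| \ge 6 + 2 = 8$. Combining (i) and (ii), $E_2$ attains the lower bound while generating the correct factor monoid, so $E_{min} = E_2$ is minimal with $|E_{min}| = 8$.

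The step I expect to be the main obstacle is part (ii): one must hand-build finite monoid models finely tuned so that \emph{every} equation of the reduced set is respected while the single target relation is broken, and — as the auxiliary lemma on $E_{ac}$ shows — the two eliminated relations that are retained cannot be chosen arbitrarily, so the upper-bound witness of part (i) has to be chosen compatibly (keeping $aa=\lambda$ and $bb=\lambda$, as $E_2$ does) rather than as an arbitrary pair. Guessing the correct word-chains in part (i) is the secondary difficulty, but it is mechanical by comparison.
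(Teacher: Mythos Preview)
Your proposal is correct and follows essentially the same route as the paper: the upper bound via the Tietze-style chain $E_6 \supset E_5 \supset \cdots \supset E_2$ (re-deriving each dropped $xx=\lambda$ from the surviving relations, exactly as in Proposition~\ref{mainprop1}) and the lower bound via the two model-theoretic obstructions of Proposition~\ref{mainprop2} (the $\mathbb{Z}/2$ model forcing all six anti-commutative equations, the $\mathbb{Z}/3$ model forcing at least two eliminated equations). Your closing remark that the retained pair of eliminated relations must be chosen compatibly---as witnessed by the $E_{ac}$ lemma---is also precisely the caveat the paper records.
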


Next, we show that there is an 8 element set of equations $F_2 \nsubseteq E_6$ such that $\Sigma^*/ R_{F_2} = \Sigma ^*/ R_{E_6}$.
\begin{proposition}
Let $F_2 \nsubseteq E_6$ be a set of equations defined by
\begin{equation*}
F_2 =
\begin{Bmatrix}
aa=\lambda,& bfbf=a\\
bb = \lambda,&dada=b\\
 	& eded=c\\
           &cbcb=d\\
           &cfcf=e\\
           & aeae=f
\end{Bmatrix}.
\end{equation*}
Then $\Sigma^*/ R_{F_2} = \Sigma ^*/ R_{E_6}$.
\begin{proof}
Let $F_{anti}$ and $F_6$ be  sets of equations defined by
\begin{equation*}
F_{anti} =
\begin{Bmatrix}
bfbf=a, &dada=b,&eded=c\\
cbcb=d,&cfcf=e,&aeae=f
\end{Bmatrix},\ \mbox{and}
\end{equation*}
\begin{equation*}
F_6 =
\begin{Bmatrix}
aa=\lambda,& bfbf=a\\
bb = \lambda, &dada=b\\
cc = \lambda, & eded=c\\
dd=\lambda, &cbcb=d\\
ee=\lambda, &cfcf=e\\
ff=\lambda,   & aeae=f
\end{Bmatrix}.
\end{equation*}
We can prove 
\begin{equation}
\Sigma^* /R_{F_2} = \Sigma^* /R_{F_6} \label{prop6eq0}
\end{equation} by following the same method in Proposition 4.
Since we can prove $bfbf=a, dada=b, eded=c,cbcb=d, cfcf=e \mbox{\ and\ } aeae=f$ in $F_6$, we have
\begin{equation}
\Sigma^* /R_{F_6} = \Sigma^* /R_{E_6 \cup F_{anti}}. \label{prop6eq1}
\end{equation}
Similarly, we can have
\begin{equation}
\Sigma^* /R_{E_6 \cup F_{anti}} = \Sigma^* /R_{E_6}. \label{prop6eq2}
\end{equation}
By (\ref{prop6eq0}), (\ref{prop6eq1}), and (\ref{prop6eq2}), we have $\Sigma^* /R_{F_2} = \Sigma^* /R_{E_6}. $
\end{proof}
\end{proposition}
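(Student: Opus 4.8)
The plan is to connect $F_2$ with $E_6$ through two auxiliary systems, following the pattern of the proof of Proposition~\ref{mainprop1}. Let $F_{anti}$ denote the six reversed anti-commutative equations of $F_2$, namely $bfbf=a$, $dada=b$, $eded=c$, $cbcb=d$, $cfcf=e$, $aeae=f$, and let $F_6=F_{anti}\cup\{\,gg=\lambda \mid g\in\Sigma\,\}$ be the system that also carries every eliminated equation. I aim to establish the chain
$$\Sigma^*/R_{F_2}=\Sigma^*/R_{F_6}=\Sigma^*/R_{E_6\cup F_{anti}}=\Sigma^*/R_{E_6},$$
where each ``$=$'' means that the two systems generate the same congruence $\leftrightarrow^*$ on $\Sigma^*$ (equivalently, that the identity on $\Sigma^*$ induces an isomorphism of factor monoids).

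For the first equality, since $F_2\subseteq F_6$ it suffices to show that the four extra eliminated equations $cc=\lambda$, $dd=\lambda$, $ee=\lambda$, $ff=\lambda$ are already consequences of $F_2$; adjoining consequences one at a time leaves the factor monoid unchanged, which is the elementary Tietze step used in Proposition~\ref{mainprop1}. These derivations are carried out exactly as there, with the reversed relations of $F_{anti}$ taking over the roles of the original anti-commutative relations (for instance $eded=c$ in place of $dede=c$). This bookkeeping is where the real care is needed, because the derivations in Proposition~\ref{mainprop1} are not invariant under any single relabeling of $\Sigma$, so each of them has to be re-run by hand for the reversed relations.

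For the last two equalities I would use the observation that in both $\Sigma^*/R_{F_6}$ and $\Sigma^*/R_{E_6}$ every generator $g$ satisfies $gg=\lambda$; hence the factor monoid is a group with $g^{-1}=g$, and for all $x,y\in\Sigma$ one has $(xy)^{-1}=y^{-1}x^{-1}=yx$, so that $(yxyx)^{-1}=xyxy$ and $(xyxy)^{-1}=yxyx$. Since every right-hand side $z$ occurring in an anti-commutative equation is a generator, $z^{-1}=z$, so in such a group the forward relation $xyxy=z$ and the reversed relation $yxyx=z$ are each a consequence of the other. Applying this inside $\Sigma^*/R_{F_6}$, the six relations of $F_{anti}$ entail the six forward relations $fbfb=a,\dots,eaea=f$ of $E_6$; as $F_6\subseteq E_6\cup F_{anti}$ as well, the systems $F_6$ and $E_6\cup F_{anti}$ have the same consequences, giving the second equality. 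Applying it inside $\Sigma^*/R_{E_6}$, the six forward relations entail the six reversed relations of $F_{anti}$, so $E_6\cup F_{anti}$ and $E_6$ have the same consequences, giving the third equality. Chaining the three equalities yields $\Sigma^*/R_{F_2}=\Sigma^*/R_{E_6}$.

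The step I expect to be the main obstacle is the first equality: one must confirm that the Proposition~\ref{mainprop1}-style derivations of $cc=\lambda$, $dd=\lambda$, $ee=\lambda$, $ff=\lambda$ genuinely close up once the anti-commutative relations are reversed, since these computations do not transport along a relabeling of $\Sigma$. By contrast, the last two equalities come essentially for free from the single group-theoretic identity $(yx)^2=\big((xy)^2\big)^{-1}$, valid whenever $x$ and $y$ are involutions, which is exactly what makes ``forward'' and ``reversed'' anti-commutative equations interchangeable over the eliminated equations.
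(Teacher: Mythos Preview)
Your approach is essentially identical to the paper's: both set up the chain $\Sigma^*/R_{F_2}=\Sigma^*/R_{F_6}=\Sigma^*/R_{E_6\cup F_{anti}}=\Sigma^*/R_{E_6}$ and justify each link in the same way, with the first link appealing to the method of Proposition~\ref{mainprop1} and the last two to the derivability of the reversed anti-commutative relations from the forward ones (and conversely) in the presence of all eliminated equations. Your involution argument $(yxyx)=((xyxy))^{-1}=z^{-1}=z$ for the last two links is more explicit than the paper's bare assertion, but it is the same idea.

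One remark on what you flag as the main obstacle: your worry that the Proposition~\ref{mainprop1} derivations ``do not transport along a relabeling of $\Sigma$'' and must be re-run by hand is unnecessary. The string-reversal map $\phi(x_1\cdots x_n)=x_n\cdots x_1$ is an anti-automorphism of $\Sigma^*$ that sends $E_2$ to $F_2$ and $E_6$ to $F_6$ verbatim (each eliminated equation is a palindrome, and each $xyxy=z$ becomes $yxyx=z$). Hence every equational derivation in $E_2$ is carried by $\phi$ to a derivation in $F_2$, so $\Sigma^*/R_{F_2}=\Sigma^*/R_{F_6}$ follows immediately from $\Sigma^*/R_{E_2}=\Sigma^*/R_{E_6}$ without redoing any computation.
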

\section{Conclusion \& Future Work}
We considered rewriting systems in order to reduce the size of quantum circuits. 
We compute a set of complete transformation rules using the Knuth-Bendix algorithm. 
We discovered that the length of the normal form of $w \in \Sigma$ is at most 6 and the number of $|\Sigma^* / R_{E_2}|$ is 168.

We found a minimal equation set $E_2$ of the set of equations $E_6$ such that $|E_2| = 8$.
On the other hand, we could construct a set of 8 equations 
$F_2 \nsubseteq E_6$ such that  $\Sigma^*/R_{E_6} = \Sigma^*/R_F$.
At this time, we do not have any equation set $E$ such that $\Sigma^*/R_{E_6}=\Sigma^*/R_E$ and $|E|<8$.
The calculation of Knuth-Bendix algorithm for a smaller equation set is not always faster.
We may take more computation time for a smaller equation set.
The computation time of our implementation of Knuth-Bendix algorithms take 150 seconds for $E_{all}$, 330 seconds for $E_6$ and 1200 seconds for $E_5$.

In this paper, we restricted the size of qubits, so as an area of further work, we need to investigate about 4 or more qubits quantum circuits.
We tried to compute Knuth-Bendix algorithm for 4 qubit quantum circuit rewriting system, we could not obtain results of computations..
We assume the cause to be computer power or set of equations.   
\section*{Acknowledgements}
I would like to thank Professor Yoshihiro Mizoguchi for his valuable advice and encouragement during the course of this study. 
I am grateful to Professor Miguel A. Martin-Delgado for his helpful comments.

\newcommand{\etalchar}[1]{$^{#1}$}

\section*{Appendix}

We show the ruslut KBA($E_{all}$) of Kunth-Bendix algorithm for $E_{all}$ and the monoid $\Sigma^* / R_{E_{all}}$. 

{\footnotesize
KBA($E_{all}$) = \{
$aa$ $\to$ $\lambda$,  $abcae$ $\to$ $caed$,  $abcfbd$ $\to$ $beabc$,  $abcfd$ $\to$ $fbde$,  $abd$ $\to$ $da$,  $abea$ $\to$ $bef$,  $abef$ $\to$ $bea$,  $abf$ $\to$ $fb$,  $acaed$ $\to$ $bcae$,  $acfbd$ $\to$ $eabc$,  $ada$ $\to$ $bd$,  $adfcb$ $\to$ $dfcd$,  $adfcd$ $\to$ $dfcb$,  $aea$ $\to$ $ef$,  $aef$ $\to$ $ea$,  $afb$ $\to$ $bf$,  $ba$ $\to$ $ab$,  $bb$ $\to$ $\lambda$,  $bcab$ $\to$ $cda$,  $bcaeb$ $\to$ $afcda$,  \
$bcaed$ $\to$ $acae$,  $bcb$ $\to$ $cd$,  $bcd$ $\to$ $cb$,  $bceab$ $\to$ $adefd$,  \
$bcead$ $\to$ $adefb$,  $bda$ $\to$ $ad$,  $bdea$ $\to$ $adef$,  $bdef$ $\to$ $adea$,  $bfb$ $\to$ $af$,  $bfcb$ $\to$ $afcd$,  $bfcd$ $\to$ $afcb$,  $cabc$ $\to$ $acda$,  $cabeb$ $\to$ $bebdf$,  $cabed$ $\to$ $acabe$,  $cac$ $\to$ $aca$,  
$cad$ $\to$ $bca$,  $caebd$ $\to$ $fcda$,  $caeda$ $\to$ $fbc$,  $cafc$ $\to$ $acea$,  $cbc$ $\to$ $bd$,  $cbd$ $\to$ $bc$,  $cbe$ $\to$ $bed$,  $cbfc$ $\to$ $adea$,  $cc$ $\to$ $\lambda$,  $cdaeb$ $\to$ $ebdf$,  $cdaf$ $\to$ $bcfb$,  $cde$ $\to$ $ed$,  $cdfc$ $\to$ $def$,  $ceabc$ $\to$ $acbfd$,  $cebd$ $\to$ $deb$,  $ced$ $\to$ $de$,  $cef$ $\to$ $fc$,  $cfbc$ $\to$ $aeda$,  $cfbde$ $\to$ $beabc$,  $cfc$ $\to$ $ef$,  $dab$ $\to$ $ad$,  $dac$ $\to$ $acb$,  $dad$ $\to$ $ab$,  $daebd$ $\to$ $abceb$,  $daed$ $\to$ $abce$,  $dafcb$ $\to$ $bdfcd$,  $dafcd$ $\to$ $bdfcb$,  
$db$ $\to$ $bd$,  $dc$ $\to$ $cd$,  $dd$ $\to$ $\lambda$,  $deab$ $\to$ $cead$,  $dead$ $\to$ $ceab$,  $debd$ $\to$ $ceb$,  $ded$ $\to$ $ce$,  $dfb$ $\to$ $adf$,  $dfcda$ $\to$ $aebdf$,  $eabca$ $\to$ $acafd$,  $eabce$ $\to$ $bcfbd$,  $eabcf$ $\to$ $beabc$,  
$eabe$ $\to$ $befb$,  $eac$ $\to$ $acf$,  $eade$ $\to$ $afcd$,  $eadf$ $\to$ $dfcb$,  
$eae$ $\to$ $af$,  $eaf$ $\to$ $ae$,  $ebc$ $\to$ $deb$,  $ebde$ $\to$ $bceb$,  
$ebdfc$ $\to$ $bebdf$,  $ebdfd$ $\to$ $aebdf$,  $ebe$ $\to$ $beb$,  $ebf$ $\to$ $aeb$,  $ec$ $\to$ $ce$,  $edae$ $\to$ $bcfb$,  $edaf$ $\to$ $cdae$,  $ede, 
cd$,  $edf$ $\to$ $dfc$,  $ee$ $\to$ $\lambda$,  $efbc$ $\to$ $cfbd$,  $efbd$ $\to$ $aeda$,  $efc$ $\to$ $cf$,  $fa$ $\to$ $af$,  $fbca$ $\to$ $caed$,  $fbce$ $\to$ $abcf$,  $fbcf$ $\to$ $abce$,  $fbdeb$ $\to$ $beabc$,  $fbdf$ $\to$ $dafd$, 
$fbe$ $\to$ $bea$,  $fbf$ $\to$ $ab$,  $fca$ $\to$ $cae$,  $fcbf$ $\to$ $ceab$,  
$fcdae$ $\to$ $aebdf$,  $fcdf$ $\to$ $defd$,  $fce$ $\to$ $cf$,  $fcf$ $\to$ $ce$,  $fda$ $\to$ $bfd$,  $fde$ $\to$ $cfd$,  $fdf$ $\to$ $dfd$,  $fe$ $\to$ $ef$,  $ff$ $\to$ $\lambda$ \} 
}
\par
{\footnotesize
$\Sigma^* / R_{E_{all}} =$
$\{\lambda$, $a$, $ab$, $abc$, $abca$, $abcaf$, $abcafd$, $abce$, $abcea$, 
$abceb$, $abcf$, $abcfb$, $abe$, $abeb$, $abebd$, $abebdf$, $abed$, 
$abeda$, $ac$, $aca$, $acab$, $acabe$, $acae$, $acaeb$, $acaf$, 
$acafd$, $acb$, $acbf$, $acbfd$, $acd$, $acda$, $acdae$, $acdf$, 
$acdfd$, $ace$, $acea$, $aceab$, $acead$, $aceb$, $acf$, $acfb$, 
$acfd$, $ad$, $ade$, $adea$, $adeb$, $adef$, $adefb$, $adefd$, $adf$, 
$adfc$, $adfd$, $ae$, $aeb$, $aebd$, $aebdf$, $aed$, $aeda$, $af$, 
$afc$, $afcb$, $afcd$, $afcda$, $afd$, $b$, $bc$, $bca$, $bcae$, 
$bcaf$, $bcafd$, $bce$, $bcea$, $bceb$, $bcf$, $bcfb$, $bcfbd$, 
$bcfd$, $bd$, $bde$, $bdeb$, $bdf$, $bdfc$, $bdfcb$, $bdfcd$, $bdfd$, 
$be$, $bea$, $beab$, $beabc$, $bead$, $beb$, $bebd$, $bebdf$, $bed$, 
$beda$, $bef$, $befb$, $befd$, $bf$, $bfc$, $bfd$, $c$, $ca$, $cab$, 
$cabe$, $cae$, $caeb$, $caed$, $caf$, $cafd$, $cb$, $cbf$, $cbfd$, 
$cd$, $cda$, $cdae$, $cdf$, $cdfd$, $ce$, $cea$, $ceab$, $cead$, 
$ceb$, $cf$, $cfb$, $cfbd$, $cfd$, $d$, $da$, $dae$, $daeb$, $daf$, 
$dafc$, $dafd$, $de$, $dea$, $deb$, $def$, $defb$, $defd$, $df$, 
$dfc$, $dfcb$, $dfcd$, $dfd$, $e$, $ea$, $eab$, $eabc$, $ead$, $eb$, 
$ebd$, $ebdf$, $ed$, $eda$, $ef$, $efb$, $efd$, $f$, $fb$, $fbc$, 
$fbd$, $fbde$, $fc$, $fcb$, $fcd$, $fcda$, $fd$\}.
}
\end{document}